\documentclass[journal]{IEEEtran}
\ifCLASSINFOpdf
\else
\fi

\hyphenation{op-tical net-works semi-conduc-tor}
\usepackage[numbers,sort&compress]{natbib}
\usepackage{amsfonts}
\usepackage{amsthm}
\usepackage{amsmath}
\usepackage{amssymb}
\usepackage{subeqnarray}
\usepackage{cases}
\usepackage[ruled,linesnumbered]{algorithm2e}
\usepackage{algorithmic}
\usepackage{bm}
\usepackage{flushend,cuted}
\usepackage{multicol}
\usepackage[numbers,sort&compress]{natbib}
\usepackage{extarrows}
\usepackage{color}
\usepackage{array}

\usepackage{makecell,diagbox}
\usepackage{multirow}
\usepackage{balance}
\usepackage{graphicx}
\usepackage{subfigure}
\usepackage{mathrsfs}
\usepackage{epstopdf}

\newtheorem{theorem}{Theorem}
\newtheorem{remark}{Remark}
\newtheorem{lemma}{Lemma}
\setlength{\parskip}{0\baselineskip}

\newcommand{\PreserveBackslash}[1]{\let\temp=\\#1\let\\=\temp}
\newcolumntype{C}[1]{>{\PreserveBackslash\centering}p{#1}}
\newcolumntype{R}[1]{>{\PreserveBackslash\raggedleft}p{#1}}
\newcolumntype{L}[1]{>{\PreserveBackslash\raggedright}p{#1}}

\begin{document}
%
\title{Joint Rate and Resource Allocation in Hybrid Digital-Analog Transmission over Fading Channels}

\author{\IEEEauthorblockN{Xiaoda Jiang, Hancheng Lu~\IEEEmembership{Member,~IEEE}}
\thanks{Copyright (c) 2015 IEEE. Personal use of this material is permitted. However, permission to use this material for any other purposes must be obtained from the IEEE by sending a request to pubs-permissions@ieee.org.}
\thanks{This work was supported in part by the National Science Foundation of China (No.91538203, 61390513, 61771445) and the Fundamental Research Funds for the Central Universities. The review of this paper was coordinated by D. Marabissi. (\em{Corresponding author: Hancheng Lu}.)}
\thanks{X. Jiang and H. Lu are with the Department of Electronic Engineering and Information Science, University of Science and Technology of China, Hefei, Anhui, China, 230027 (email: jxd95123@mail.ustc.edu.cn, hclu@ustc.edu.cn).}
}


\maketitle

\begin{abstract}
In hybrid digital-analog (HDA) systems, resource allocation has been utilized to achieve desired distortion performance. However, existing studies on this issue assume error-free digital transmission, which is not valid for fading channels. With time-varying channel fading, the exact channel state information is not available at the transmitter. Thus, random outage and resulting digital distortion cannot be ignored. Moreover, rate allocation should be considered in resource allocation, since it not only determines the amount of information for digital transmission and that for analog transmission, but also affects the outage probability. Based on above observations, in this paper, we attempt to perform joint rate and resource allocation strategies to optimize system distortion in HDA systems over fading channels. Consider a bandwidth expansion scenario where a memoryless Gaussian source is transmitted in an HDA system with the entropy-constrained scalar quantizer (ECSQ). Firstly, we formulate the joint allocation problem as an expected system distortion minimization problem where both analog and digital distortion are considered. Then, in the limit of low outage probability, we decompose the problem into two coupled sub-problems based on the block coordinate descent method, and propose an iterative gradient algorithm to approach the optimal solution. Furthermore, we extend our work to the multivariate Gaussian source scenario where a two-stage fast algorithm integrating rounding and greedy strategies is proposed to optimize the joint rate and resource allocation problem. Finally, simulation results demonstrate that the proposed algorithms can achieve up to 2.3dB gains in terms of signal-to-distortion ratio over existing schemes under the single Gaussian source scenario, and up to 3.5dB gains under the multivariate Gaussian source scenario.
\end{abstract}

\begin{IEEEkeywords}
Hybrid digital-analog transmission, fading channels, resource allocation, outage probability.
\end{IEEEkeywords}

\IEEEpeerreviewmaketitle

\section{Introduction}\label{section:introduction}
\IEEEPARstart{R}{E}{C}{E}{N}{T} advances in mobile computing have promoted transmission of analog-valued signals over wireless fading channels. Examples include multimedia delivery to mobile users and measurements accumulation at the sensor fusion center. However, it is well known that the conventional digital scheme, based on the Shannon separation principle, cannot provide robustness over a wide range of wireless channel conditions. Specifically, it may suffer from \emph {cliff effect} \cite{shannon1949}: performance degrades drastically when the instantaneous signal-to-noise ratio (SNR) drops beneath the target SNR, due to the quantized bit's sensitivity to noise; while performance could not improve even if SNR increases beyond the target SNR, due to the nonrecoverable quantization errors. On the contrary, analog systems, such as amplitude modulation (AM) schemes, can eliminate cliff effect inherently, and are optimal for a Gaussian vector transmitted over a Gaussian channel with equal source-channel bandwidth \cite{gastpar2003tocode}. However, due to low compression efficiency, the analog system is practically inferior to the digital system, especially for the case of bandwidth mismatch between the source and the channel.

\par
To achieve the balance between efficiency and robustness, hybrid digital-analog (HDA) transmission schemes have been proposed, with the capability to outperform digital and analog systems both in theoretical fields \cite{Mittal2002hybrid}, \cite{xu2016hybrid}, \cite{minero2015a}, \cite{koken2015on}, \cite{chen2014zero}, \cite{rungeler2014design} and for image/video applications \cite{tan2017a}, \cite{zhao2016adaptive}, \cite{he2015structure}, \cite{Yu2014scalable}. In HDA, as reconstructed signals consist of digital and analog components, resources (i.e., power and bandwidth) should be wisely assigned to digital and analog codes to achieve optimal performance in terms of distortion \cite{chen2014zero}, \cite{he2015structure}, \cite{skoglund2006hybrid}, \cite{wang2009hybrid}, \cite{lu2016joint}. The authors in \cite{skoglund2006hybrid} constructed HDA schemes with various power allocation coefficients, and demonstrated the importance of power allocation on system distortion. Two vector quantization (VQ) based HDA schemes with bandwidth mismatch were proposed in \cite{wang2009hybrid}, where power allocation was optimized for asymptotical distortion over additive white Gaussian noise (AWGN) channels. The authors in \cite{lu2016joint} implemented HDA transmission in wireless relay network, and proposed a joint power allocation scheme for both digital-analog power allocation and source-relay power allocation. Apart from power allocation, He {\em et al.} in \cite{he2015structure} also took bandwidth allocation into consideration.

\par
Nonetheless, there still remains challenges for resource allocation in HDA transmission over wireless fading channels. First, in most existing studies, the exact channel state information (CSI) is assumed to be available at the transmitter and enough resources are allocated for digital codes to enable error-free digital transmission. Consequently, distortion optimization is only performed by the analog part. However, in wireless fading channels where the exact CSI is not known at the transmitter, such kind of resource allocation strategies cannot be adopted. In fact, due to inevitable random outage, digital distortion cannot be neglected in this case. Second, {rate allocation is important in a general HDA system, where source is split into quantized bits for digital transmission and quantization errors for analog transmission \cite{skoglund2006hybrid}, \cite{wang2009hybrid}. The reason is that rate allocation determines the amount of {information} for digital transmission and that for analog transmission. However, it is preset empirically before resource allocation in early studies.} This operation is much reasonable with perfect CSI assumption, but not valid in fading channels. With consideration of digital distortion, rate allocation and resource allocation are coupled and both have significant impacts on system distortion. More information with higher coding rate or less resources allocated to digital transmission would increase distortion caused by digital outage, at the same time, distortion introduced by analog decoding would be alleviated.

\par
To overcome aforementioned challenges, in this paper, we jointly perform rate and resource allocation in HDA transmission over wireless fading channels, with the goal to minimize expected system distortion. Both analog and digital distortion are considered due to inevitable random outage in wireless fading channels. We first study the joint allocation problem under the single Gaussian source scenario. Further, we extend our work to the multivariate Gaussian (or parallel Gaussian) source scenario, where each Gaussian vector is independent and non-identically distributed with diverse variances from other vectors. In fact, the multivariate Gaussian model has been widely used in many signal processing applications, including sensor data \cite{deshpande2004model}, \cite{liu2012energy}, multimedia data \cite{cui2014robust}, watermarking application \cite{moulin2004parallel}, and object clustering \cite{davis2007differential}. The main contributions are summarized as follows.
\par
1) In an HDA system with the entropy-constrained scalar quantizer (ECSQ)\cite{gish1968asymptotically}, where quantized bits are transmitted in the digital branch and quantization errors are transmitted in the analog branch \cite{skoglund2006hybrid}, \cite{wang2009hybrid}, the quantization rate is configured to determine information split between digital and analog parts. Considering such system with bandwidth expansion over a quasi-static Rayleigh fading channel, orthogonal transmission for analog and digital signals are adopted. We analyze analog distortion as well as digital distortion theoretically under random outage. In the limit of low outage probability, we derive the closed-form expression of expected distortion of the HDA system, based on the asymptotical characteristic of the ECSQ.
\par
2) Under the single Gaussian source scenario, the derived joint rate and resource allocation problem is a non-convex nonlinear problem when the ECSQ is involved. we decompose it into two coupled sub-problems based on the block coordinate descent (BCD) method \cite{bersekas1999nonlinear}. Moreover, we derive the expression of the solution for each individual sub-problem. Based on our theoretical work, we propose an iterative gradient algorithm to approach the optimal solution for the derived problem.
\par
3) Under the multivariate Gaussian source scenario, we solve the joint rate and resource allocation problem in two stages when the ECSQ is considered, i.e. intra-component optimization and inter-component optimization. The former distributedly optimizes rate and resource allocation for each Gaussian vector component, which could be solved analogously as that in the single Gaussian source scenario. The latter attempts to allocate resources among components, which is a mixed integer nonlinear programming (MINLP) problem. An efficient algorithm adopting rounding and greedy strategies is proposed, which can converge to the optimal solution for the slack problem due to its proved convexity. Then a locally optimal integer solution is found in a greedy manner.
\par
The rest of the paper is organized as follows. Related work is discussed briefly in Section \ref{section:related work}. In Section \ref{section:systemModelAssumptions}, the system model of HDA transmission with bandwidth expansion is described in detail. In Section \ref{section:single gaussian source}, we formulate the joint rate and resource allocation problem under single Gaussian source scenario and propose an algorithm to solve it. In Section \ref{section:multi-gaussian source}, we extend our work to the multivariate Gaussian source scenario. Performance evaluation is presented in Section \ref{section:simulation}. Finally, we conclude our work with a summary in Section \ref{section:conclusion}.
\par
{\em Notations:}
Vectors and matrices are denoted by bold-faced characters. Upper-case letters are used for random entities and lower-case letters for their realizations. $[x]^+$ denotes $\max(x,0)$. Unless specified otherwise, the base of logarithms is two. $\mathbb{R}$ and $\mathbb{N}$ represent the set of real numbers and the set of non-negative integers, respectively.

\begin{figure*}[htb]
\centering
\includegraphics[width=12cm]{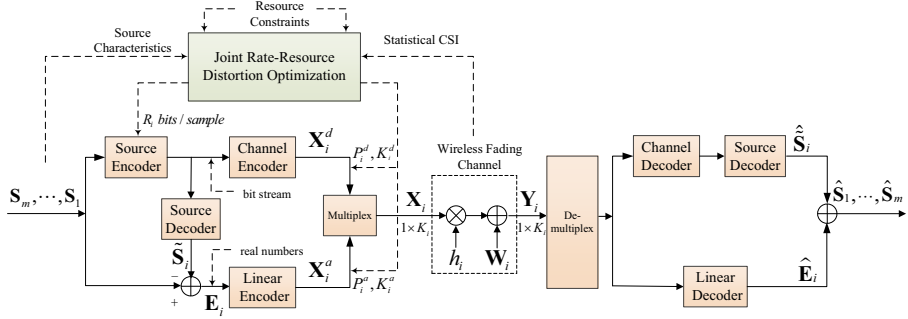}
\caption{System overview of HDA transmission over the wireless fading channel with bandwidth expansion.}\label{figure:systemOverview}
\end{figure*}

\section{Related Work}\label{section:related work}
Due to severely limited resources (i.e., bandwidth and power) and time-varying channel conditions in wireless networks, it is desirable to construct a robust system over a wide range of channel conditions. Recently, HDA transmission, which integrates both analog coding approach and digital coding approach, has attracted wide attention \cite{Mittal2002hybrid}, \cite{xu2016hybrid}, \cite{minero2015a}, \cite{zhao2016adaptive}, \cite{wang2009hybrid}, \cite{lu2016joint}, \cite{pezeshkpour2014optimal}. It has been proven that graceful degradation and robustness over varying SNR can be achieved in HDA systems for Gaussian channels.
\par
System performance of HDA is usually measured in terms of distortion. Several researches focused on theoretically analyzing distortion performance of HDA schemes. Different from separate source and
channel coding, HDA transmission can be categorized into joint source-channel coding (JSCC) schemes, which can achieve noticeable gains in the aspect of distortion performance and coding efficiency \cite{lim2003joint}, \cite{shahidi2013map}, \cite{wu2009design}. The authors in \cite{pezeshkpour2014optimal} studied HDA JSCC schemes of transmitting the Gaussian source over the Gaussian channel with the presence of an interference, where the tradeoff of distortion between the source and the channel state was analyzed. The theoretical analysis characterized the achievable region of such distortion pair. A generalized HDA framework was proposed in \cite{minero2015a} for Gaussian systems. Distortion performance of this unified HDA design was investigated for two typical scenarios: lossy JSCC over multiple access channels and channel coding over relay networks. In above studies, Gaussian channels were assumed. G.Caire and K.Narayananin in \cite{caire2007distortion} considered the Rayleigh fading channel and analyzed the distortion SNR exponent versus spectral efficiency in the limit of high SNR. A tight lower bound of the exponent was shown achievable by a proposed HDA space-time coding scheme.
\par

Resource allocation especially power allocation has been investigated to achieve desired distortion performance in HDA systems \cite{chen2014zero}, \cite{skoglund2006hybrid}, \cite{wang2009hybrid}, \cite{lu2016joint}, \cite{wu2017scalable}. In most existing HDA systems, power allocation is performed with the assumption that perfect CSI is available at the transmitter. Chen {\em et al.} in \cite{chen2014zero} developed an HDA joint source-channel coding method for Wyner-Ziv problem, and further applied such HDA scheme in scenarios without side information but with 1:2 bandwidth expansion. The source utilized one channel use for analog transmission and another channel use for superposed digital and analog transmission. Power allocation of the two channel uses, as well as digital-analog power allocation of the second channel use, were numerically optimized. In \cite{lu2016joint}, HDA transmission was implemented into the wireless relay network, where a joint power allocation scheme has been proposed for both digital-analog power allocation and source-relay power allocation. Wang {\em et al.} in \cite{wang2009hybrid} investigated the effect of digital-analog power allocation on designing the optimal HDA system. Two HDA systems have been proposed for transmission of a Gaussian vector over an AWGN channel under bandwidth compression, where the upper bound on asymptotically optimal distortion was analyzed. Moreover, the authors in \cite{wu2017scalable} implemented HDA transmission in heterogeneous cellular networks, where a femto user received its superposed digital and analog signals from macro base stations or femto access points according to a load balancing parameter. Power allocation for HDA transmission in \cite{wu2017scalable} was analyzed with the stochastic geometry theory.
\par
In an HDA system, the first step is to split the source for analog transmission and digital transmission, which is referred as rate allocation in this paper. Like resource allocation, rate allocation also has significant impact on performance of HDA transmission. As quantization errors are transmitted in the analog branch of the HDA system, rate allocation can be configured by the quantizer. Some work has been carried out to investigate the effect of quantizers in HDA \cite{Coward2000quantizer},\cite{kleiner2009asymptotically}. The hybrid scalar quantizer linear coder was proposed in \cite{Coward2000quantizer} for bandwidth expansion of ratio two. Subsequently, Kleiner {\em et al.} in \cite{kleiner2009asymptotically} extended the system into arbitrary bandwidth expansion ratio, which quantized the source along with repeatedly quantizing the error caused from the previous step, and ended the last step with linear coders. Unfortunately, in these studies, resource allocation has not been considered. A recent study in \cite{tan2017an} has taken both rate allocation and power allocation into consideration for HDA transmission of video. In this literature, rate allocation was controlled by the quantization parameter (QP) and a prediction model was proposed to characterize the relationship between the QP and the data variance of analog data. Based on the model, it optimized power allocation and the QP. However, it only considered transmission under the Gaussian channel with matched bandwidth, its solution could not be directly applied into fading channels without CSI and the scenario that source bandwidth and channel bandwidth are mismatched.

\section{System Model}\label{section:systemModelAssumptions}

The block diagram of the HDA transmission system over the fading channel with bandwidth expansion is shown in Fig.\ref{figure:systemOverview}. The upper part of the system is the digital coder and the lower part is the analog coder. The source is split into quantized bits and quantization errors for digital and analog transmission, respectively. For convenience, the main notations are summarized in Table \ref{tab:notation}.
\subsection{Overview}\label{subsection:overview}
At the transmitter, the time-discrete and analog-valued memoryless vectors are grouped as an $m\times L$ matrix
$\bm{S}=[\bm{S}_1, \bm{S}_2, \cdots, \bm{S}_m]^T$, where $T$ denotes transposition. Each vector is with $L \times 1$ dimension, i.e., $\bm{S}_i=[s_{i1},s_{i2},...s_{iL}]^T$, $\forall i \in \{1, ..., m\}$. Besides, the $L$ samples in each vector (e.g., $\bm S_i$) are independent identically distributed (i.i.d.). At the digital encoder, these vectors are successively fed to the source encoder and channel encoder. For the $i^{th}$ vector $\bm S_i$, its digital codeword is $\bm{X}^d_i$. At the analog encoder, the output of the digital source encoder is first sent to a source decoder, to generate time-discrete and discrete-valued vector $\tilde{\bm{S}_i}$. Then by subtracting $\tilde{\bm{S}_i}$ from $\bm{S}_i$, the quantization error $\bm{E}_i=\bm{S}_i-\tilde{\bm{S}}_i$ can be obtained, which is further coded with a linear encoder into the analog codeword $\bm{X}^a_i$. Finally, $\bm{X}^d_i$ and $\bm{X}^a_i$, which are separatively transmitted with corresponding channel uses, form the transmitted codeword $\bm{X}_i$. Note that the dimension of $\bm{X}_i$ is related to the number of of channel uses assigned to it. Since bandwidth expansion is considered, analog signals and digital signals occupies different numbers of channel uses for orthogonal transmission.
\par
\begin{table}[tb]
\centering
\caption{Parameter Descriptions}\label{tab:notation}
\begin{tabular}{|L{1cm}|L{7cm}|}
     \hline
     {\bf Notation} &  {\bf Description} \\
     \hline
     $m$ & number of Gaussian vectors in multivariate Gaussian source\\
     \hline
     $L$ & number of samples in each Gaussian vector\\
     \hline
     $\bm{S}$ &  multivariate Gaussian vectors matrix, $\bm{S}=[\bm{S}_1, \bm{S}_2, \cdots, \bm{S}_m]^T$\\
     \hline
     $\bm{S}_i$ &  vector whose transposition represents the $i^{th}$ row of $\bm S$, $\bm{S}_i=[s_{i1},s_{i2},...s_{iL}]^T$\\
     \hline
     $P,K$ & power budget, number of available channel uses\\
     \hline
     $\bm{E_i}$ & quantization error vector of $\bm{S}_i$, $L\times1$ vector\\
     \hline
     $\bm{X}_i$ & channel codeword of $\bm{S}_i$, $K_i\times1$ vector\\
     \hline
     $P_i,K_i$ & power, number of channel users assigned for $\bm{X}_i$\\
     \hline
     $\bm{X}_i^a$ & analog channel codeword of $\bm{S}_i$, $K_i^a\times1$ vector\\
     \hline
     $P_i^a,K_i^a$ & power, number of channel users assigned for $\bm{X}_i^a$\\
     \hline
     $\bm{X}_i^d$ & digital channel codeword of $\bm{S}_i$, $K_i^d\times1$ vector\\
     \hline
     $P_i^d,K_i^d$ & power, number of channel users assigned for $\bm{X}_i^d$\\
     \hline
     $\bm{W_i}$ & channel noise during transmission of $\bm{X}_i$, $K_i\times1$ vector\\
     \hline
     $\bm{Y}_i$ & received signal corresponding to $\bm{X}_i$, $K_i\times1$ vector\\
     \hline
     $\sigma_i^2$ & variance of $\bm{S}_i$\\
     \hline
     ${\bm \Sigma_S}$ & correlation matrix of $\bm S$ with $m\times m$ dimension\\
     \hline
     $R_i$ & quantization rate of $\bm{S}_i$\\
     \hline
\end{tabular}
\end{table}
At the receiver, the decoder maps the channel output $\bm{Y}_i$ into the estimation $\widehat{\bm{S}}_i$. Ideally, the mapping should be selected to minimize the mean squared-error (MSE) distortion $E||{\bm{S}_i}-\widehat{\bm{S}}_i||^2$. However, the high implementation complexity prohibits the use of such a decoder. To enable practical HDA transmission, we adopt a suboptimal decoder proposed in \cite{skoglund2006hybrid}, as shown in Fig. \ref{figure:systemOverview}. The received digital and analog signals are separately decoded. The decoding process of the digital signal is inverse of the digital encoding process, while the analog decoder is a linear minimum mean square error (MMSE) estimator. The add of $\widehat{\tilde{\bm{S}}}_i$ and $\widehat{\bm{E}}_i$, which are corresponding outputs of the digital and analog decoders, respectively, forms the reconstructed vector $\widehat{\bm{S}}_i$.

\subsection{Source and Channel}\label{subsection:source and channel}

We assume $\bm{S}=[\bm{S}_1, \bm{S}_2, \cdots, \bm{S}_m]^T$ to be a zero-mean multivariate Gaussian complex vector. $\bm S$ is said to be multivariate Gaussian if it follows Gaussian distribution $\mathcal{N}({\bm \mu_S},{\bm \Sigma_S})$, where $\bm \mu_S$ is an $m$-length vector of means, and ${\bm \Sigma_S}$ is an $m \times m$ correlation matrix. Throughout the paper, $\bm \mu_S$ is a all-zero vector $\bm 0$, and ${\bm \Sigma_S}={\rm diag} \, \{\sigma_i^2\}_{i=1}^{i=m}$ is considered to be diagonal, where $\sigma_i^2$ is the variance of the $i^{th}$ vector $\bm S_i$. This implies that the vectors are independent with each other. If the correlation matrix ${\bm \Sigma_S}$ is not diagonal, $\bm{S}$ can be de-correlated by multiplying the Karhunen-Loeve transforming matrix $\bm T$, i.e., $\bm {TS}$. Such diagonal assumption and de-correlation for signals can be traced in early work \cite{skoglund2006hybrid}, \cite{wang2009hybrid}, \cite{cui2014robust}. In fact, such multivariate Gaussian source model is practical to characterize wide data analysis applications \cite{deshpande2004model}, \cite{liu2012energy}, \cite{cui2014robust}, \cite{moulin2004parallel}, \cite{davis2007differential}. Although $L$ samples in each vector is complex Gaussian i.i.d. (i.e., $\bm{S}_i\sim\mathcal{CN}(0,\sigma_i^2), ~\forall i$), samples of different vectors are non-identically distributed with different variances. Without a loss of generality, all vectors in $\bm S$ are in the descending order of their variances, i.e., $\sigma_1^2\geq\sigma_2^2\geq...\geq\sigma_m^2$.
\par

Assume that $K$ uncorrelated channel uses are available, and the channel coherence time is much larger than the sampling time of each Gaussian vector. Based on this assumption, the channel can be modeled as a quasi-static Rayleigh fading channel. Specifically, the channel is assumed to be constant over the duration of any vector transmission but random with Rayleigh fading coefficient $h_i\sim\mathcal{CN}(0,1),\forall i$. Thus, the channel is assumed as non-ergodic, which implies that the exact $h_i$ is not available at the transmitter. $\bm W_i$ is complex additive white Gaussian noise, corresponding to transmission of $\bm{X}_i$. Hence, such channel can be modeled as
\begin{eqnarray}\label{equation:digitalchannel}
\bm{Y_i}=h_i\bm{X}_i+\bm{W}_i, ~~~\forall i.
\end{eqnarray}
\par
The power of Gaussian noise is assumed to be same among all available channels, denoted as $\sigma_w^2$. The system average SNR per channel use is
\begin{eqnarray}\label{equation:generalSNR}
\gamma=\frac{P}{K\sigma_w^2},
\end{eqnarray}
where $P$ is the system power budget.
\subsection{{\ Rate} Allocation}\label{subsection:data allocation}
Some theoretical and practical methods are available for {information split between digital and analog parts} in HDA. For example, dimension splitting is adopted in \cite{wang2009hybrid}, where an $n$-dimensional vector is split into the $n'$-dimensional digital vector and $(n-n')$-dimensional analog vector. For video transforming coefficients, more important coefficients are conveyed by digital transmission and the residual is conveyed by analog transmission \cite{zhao2016adaptive}, \cite{he2015structure}. In this research, we adopt a general HDA framework, where the source is split into quantized bits for digital transmission and quantization errors for analog transmission. In this case, digital-analog information split can be configured with the quantization rate, which is called rate allocation. This approach to achieve rate allocation for HDA transmission is common in theoretical fields \cite{rungeler2014design}, \cite{lu2016joint} and practical fields \cite{tan2017a}.
\par
To enable quantization with variable rates, the $i^{th}$ vector $\bm S_i$ is assumed to be encoded by the ECSQ, with rate $R_i$ bits per sample. With the ECSQ, $R_i$ is actually the entropy of the encoder output. In fact, it has been shown that the quantization error $\bm{E}_i$ is asymptotically Gaussian distributed with zero mean and variance $\sigma_{ei}^2$ \cite{wang2009hybrid}. Furthermore, for the ECSQ adopted in this research, Gish and Pierce \cite{gish1968asymptotically} have demonstrated that the rate of the quantizer asymptotically exceeds the rate given by Shannon rate-distortion function, with a constant parameter $\frac{1}{2}\log\frac{\pi e}{6}$, i.e.,
\begin{equation}\label{equation:rate-distortion}
R_i(\sigma_{ei}^2)=\frac{1}{2}\log{\frac{\sigma_i^2}{\sigma_{ei}^2}}+\frac{1}{2}\log\frac{\pi e}{6}.
\end{equation}
This also indicates that the analog signal is determined by the quantization rate.
\subsection{Resource Allocation}\label{subsection:resource allocation}
Let $\bm{K}=(K_1,K_2,...,K_m),\bm{P}=(P_1,P_2,...,P_m)$ respectively be the channel and power allocation vetors among multivariate Gaussian components. For arbitrary $1\leq i \leq m$, the digital and analog coded signals of the $i^{th}$ Gaussian vector $\bm S_i$ should respectively satisfy the power constraint as
\begin{eqnarray*}\label{equation:generalSNR}
E||{\bm{X}^d_i}||^2\leq \alpha_i P_i, \quad E||{\bm{X}^a_i}||^2\leq (1-\alpha_i) P_i,
\end{eqnarray*}
where $\alpha_i \in [0,1]$ is the power allocation coefficient for digital transmission of ${\bm S}_i$. Furthermore, the average SNR per channel use for digital transmission and that for analog transmission of $\bm S_i$ is:
\begin{equation}\label{equation:digitalAnalogSNR}
\gamma^d_i=\frac{\alpha_i P_i}{K^d_i {\  \sigma_w^2}}, \quad \gamma^a_i=\frac{(1-\alpha_i)P_i}{K^a_i{\  \sigma_w^2}}.
\end{equation}
where $K^d_i$ and $K^a_i$ are the number of channel uses (bandwidth) for digital transmission and that for analog transmission, respectively. In this research, linear AM is adopted as the analog encoder. The number of of channel uses occupied by analog transmission is set to be equal to the number of source samples, i.e., $K^a_i=L$, $K^d_i=K_i-L$. Note that, in the case of bandwidth expansion, each vector is transmitted, which implies that the available channel uses are more than samples. Thus, $K_i > L$ for arbitrary $i\in(1,2,\cdots,m)$ and $\Sigma_{i=1}^mK_i > mL$. Besides, system transmission should meet the bandwidth budget, namely, $\sum\nolimits_{i=1}^m K_i \leq K$.
\par
\section{Joint Rate and Resource Allocation for Single Gaussian Source Scenario}\label{section:single gaussian source}
In this section, we jointly optimize rate allocation and resource allocation for single complex Gaussian vector HDA transmission with bandwidth expansion. Given fading statistics and specific codes, the expression of expected system distortion is first derived with consideration of digital distortion and analog distortion. Then we formulate the joint optimization problem as a minimization problem on expected system distortion, in the limit of low outage probability and with ECSQ. We finally propose an iterative algorithm adopting BCD and gradient methods, based on the analytical efforts.

\subsection{Problem Formulation}\label{subsection:Problem Formulation}
Without a loss of generality, we take transmission of the $i^{th}$ vector $\bm S_i$ into consideration in this scenario.
\subsubsection{Outage Formulation}
The commonly adopted concept for non-ergodic fading channels is information outage \cite{ozarow1994information}, \cite{li2016outage}. It can be defined as the event that the mutual information of the channel could not support a certain data rate \cite{ozarow1994information}. Mathematically, the information outage event can be expressed as
\begin{eqnarray*}\label{equation:digitalCapacity}
\mathscr{A}_i=\{h_i:\mathcal{I}(\bm{X}_i;\bm{Y}_i|h_i) < R_{ti}\},
\end{eqnarray*}
where the data rate is $R_{i}^t=LR_i/K^d_i$ bits per channel use.
The mutual information of the non-ergodic channel characterized in (\ref{equation:digitalchannel}) can be expressed as \cite{telatar1999capacity}
\begin{eqnarray}\label{equation:mutual-information}
\begin{split}
\mathcal{I}(\bm{X}_i;\bm{Y}_i|h_i)\!=\!\mathcal{H}(\bm{Y}_i)-\mathcal{H}(\bm{Y}_i|\bm{X}_i)\!=\!\log\,(1+{\  |h_i|^2}\gamma^d_i).
\end{split}
\end{eqnarray}
\par
Hence, the outage probability can be derived as
\begin{equation}\label{equation:outageProbability-original}
\begin{split}
P^{out}_i=P\{{\rm log} \, (1+{\  |h_i|^2}\gamma^d_i)<R_{ti}\}=F_{g_i}(\frac{2^{R_{ti}}-1}{\gamma^d_i}),
\end{split}
\end{equation}
where $g_i=|h_i|^2$ is the channel fading gain and $F_{g_i}(\cdot)$ is the cumulative distribution function (CDF) of $g_i$. Since $g_i$ is exponentially distributed with parameter $\lambda=1$, the outage probability can be reexpressed as
\begin{equation}\label{equation:outageProbability}
P^{out}_i = 1-e^{-\frac{2^{R_{i}^t}-1}{\gamma^d_i}}.
\end{equation}

\subsubsection{Expected System Distortion}
Distortion between the vector and its reproduction is a common parameter to characterize performance of transmission. With the criteria of MSE, system distortion is
\begin{equation*}\label{equation:quadratic-distortion}
\begin{split}
D_i\triangleq E||{\bm{S}_i}-\widehat{\bm{S}}_i ||^2=\frac{1}{L}\sum_{l=1}^{L}(s_{il}-\widehat{s}_{il})^2.
\end{split}
\end{equation*}
Given channel distribution, expected system distortion $E[D_i]$ is the expectation of $D_i$. Throughout this paper, the expectation is taken with respect to the random fading gain $g_i$. For expression simplicity, we denote expected system distortion as $ED_i$ in the subsequent discussion.
\par
As outage happens randomly, system distortion is caused by both digital transmission and analog transmission. As Fig. \ref{figure:systemOverview} shows, the reconstructed signal at the receiver is $\widehat{\bm{S}}_i=\widehat{\tilde{\bm{S}}}_i+\widehat{\bm{E}}_i$.
When outage happens in digital transmission, the digital signal cannot be decoded at all. In such case, $\widehat{\bm{S}}_i=0$ and system distortion is the variance of the Gaussian vector, i.e., $\sigma_i^2$. On the other hand, if outage does not happen, the digital signal can always be decoded correctly based on the assumption of ideal coding scheme. In this case, distortion of the digital part is $E||\tilde{\bm{S}}_i-\widehat{\tilde{\bm{S}}}_i||^2=0$. Moreover, it has been shown that the quantization error $\bm{E}_i$ can be considered as asymptotically uncorrelated with $\tilde{\bm{S}}_i$ \cite{Mittal2002hybrid}. Thus we have
\begin{equation*}\label{equation:distortionExpression}
\begin{aligned}
{E}||\bm{S}_i-\widehat{\bm{S}}_i||^2&={E}||({\tilde{\bm{S}}_i}+\bm{E}_i)-(\widehat{\tilde{\bm{S}}}_i+\widehat{\bm{E}}_i)||^2\\
&=E||\bm{E}_i-\widehat{\bm{E}}_i||^2.
\end{aligned}
\end{equation*}
Therefore, system distortion when outage does not occur, is merely the linear MMSE estimation error of the analog signal, i.e., $mmse(\bm{E}_i|g_i)$. Based on the above analysis, expected system distortion can be expressed as:
\begin{equation}\label{equation:distortionExpression1}
\begin{split}
ED_i&={ED_i}_{\mathscr{A}_i}+{ED_i}_{\mathscr{A}_i^c}\\
 &=\int_{\mathscr{A}_i} \sigma_i^2\,dF_{g_i}+\int_{\mathscr{A}_i^c} mmse(\bm{E}_i|g_i)\,dF_{g_i},
\end{split}
\end{equation}
where the integral is taken over the fading gain $g_i$. $\mathscr{A}_i^c$ represents the event of non-outage. Based on (\ref{equation:rate-distortion}), the ratio of $\sigma_{ei}^2$ and $\sigma_i^2$ is
$\sigma_{ei}^2 / \sigma_i^2 = \frac{\pi e}{6}2^{-2R_i}$. Hence, the variance of quantization errors decreases rapidly with increasing $R_i$. We introduce a large enough threshold $R_i^{th}$, based on the results of \cite{netravali1976optimum}. When $R_i$ is above a threshold $R_i^{th}$, $\sigma_{ei}^2 / \sigma_i^2$ is small enough, and the quantization is fine enough. In this case, quantization errors contribute little to the performance improvement. In other word, purely digital transmission, not HDA transmission, should be considered when $R_i$ is above a certain threshold. Based on the above discussion, the joint optimization problem for HDA transmission over fading channels can be formulated as a constrained expected system distortion minimization problem, which is expressed as
\begin{equation}\label{equation:single-optimization1}
\begin{aligned}
\!\mathcal{P}1:~
&\min_{\alpha_i,R_i,K_i^d}
&& ED_i={ED_i}_{\mathscr{A}_i}+{ED_i}_{\mathscr{A}_i^c}\\
& ~~~~\text{s.t.}
&& {0 \leq R_i \leq R_i^{th}, ~~0\leq \alpha_i \leq 1,}\\
&&& K_i^d + K_i^a = K_i,\\
&&& R_i,\alpha_i\in\mathbb{R}, ~~K_i^d\in\mathbb{N}.
\end{aligned}
\end{equation}
\subsection{Problem Analysis}\label{subsection:Problem Analysis}
We first analyze expected system distortion in the case that outage happens. Based on (\ref{equation:outageProbability}), it can be rewritten and approximated as
\begin{equation}\label{equation:outage-distortion}
\begin{split}
{ED_i}_{\mathscr{A}_i}&=\int_{\mathscr{A}_i} \sigma_i^2\,dF_{g_i}=P_i^{out}\sigma_i^2\\
&=(1-e^{-\frac{2^{R_{i}^t}-1}{\gamma_i^d}})\sigma_i^2 \doteq \frac{2^{R_{i}^t}-1}{\gamma_i^d}\sigma_i^2
\end{split}
\end{equation}
\par
Define $\tau_i=\frac{2^{R_{i}^t}-1}{\gamma_i^d}$. Then the above approximation holds with $\tau_i\rightarrow0$, as $P_i^{out}=(1-e^{-\tau_i})\doteq\tau_i$. From this, we can find that $\tau_i\rightarrow0$ implies the limit of low outage probability. For this general HDA scheme, quantized bits are transmitted via digital codes. While quantization errors are transmitted via analog codes, act as the supplement to enhance performance. In this case, once outage happens, both digital and analog signals cannot be decoded. Thus, performance of HDA transmission would degrade rapidly with increasing outage probability. And in the following content, we mainly analyze the problem in the limit of low outage probability, i.e., $P_i^{out}\rightarrow0$. The simulation result of Fig. \ref{figure:single_mesh} also shows that the solution obtained with approximation can also achieve comparable performance in practical implementation. Similarly approximation can be referred to  \cite{caire2007distortion}, \cite{li2016outage}, \cite{gunduz2005source}.
\par
Next we begin to analyze expected distortion when outage does not happen. As analyzed in \ref{subsection:data allocation}, the quantization error $\bm{E}_i$ is asymptotically Gaussian distributed. With the linear MMSE, the estimation error of $\bm{E}_i$ is
\begin{equation}\label{equation:mmse-distortion}
E||\bm{E}_i-\widehat{\bm{E}}_i||^2=\frac{\sigma_{ei}^2}{1+g_i\gamma_i^a},
\end{equation}
where $\sigma_{ei}^2$ is the variance of $\bm{E}_i$. The relationship between $R_i$ and $\sigma_{ei}^2$ is expressed in (\ref{equation:rate-distortion}).
\par
For expression brevity, define
\begin{equation}\label{equation:Psi-function}
\Psi(x)= \int_{0}^{\infty} \frac{1}{1+g_i x}p(g_i)\,dg_i.
\end{equation}
According to (\ref{equation:rate-distortion}), (\ref{equation:mmse-distortion}) and (\ref{equation:Psi-function}), expected system distortion with absence of outage can be rewritten as
\begin{equation}\label{equation:non-outage-distortion}
\begin{split}
{ED_i}_{\mathscr{A}_i^c}&=\int_{\mathscr{A}_i^c} mmse(\bm{E}_i|g_i)\,dF_{g_i}\\
                 &=E\{\frac{\sigma_{ei}^2}{1+g_i\gamma_i^a}|\log(1+g_i\gamma_i^d)\geq R_{ti}\}\\
                 &=\sigma_{ei}^2\int_{\tau_i}^{\infty} \frac{1}{1+g_i\gamma_i^a}p(g_i)\,dg_i\\
                 &\doteq \frac{\pi e}{6}2^{-2R_i}\sigma_i^2\Psi(\gamma_i^a),
\end{split}
\end{equation}
where $p(g_i)=e^{-g_i}$ is the probability density function (pdf) of channel gain $g_i$.
The last line of (\ref{equation:non-outage-distortion}) holds with small $\tau_i$ approximation.
\par
Substituting (\ref{equation:outage-distortion}) and (\ref{equation:non-outage-distortion}) into (\ref{equation:single-optimization1}), the problem $\mathcal{P}1$ can be reformulated approximatively as follows.
\begin{equation}\label{equation:single-optimization2}
\begin{aligned}
\mathcal{P}2:~
&\min_{\alpha_i,R_i,K_i^d}
&& ED_i=\frac{2^{R_{i}^t}-1}{\gamma_i^d}\sigma_i^2+\frac{\pi e}{6}2^{-2R_i}\sigma_i^2\Psi(\gamma_i^a)\\
& ~~\text{s.t.}
&& 0 \leq R_i \leq R_i^{th}, ~~0\leq \alpha_i \leq 1,\\
&&& K_i^d + K_i^a = K_i,\\
&&& \!R_i,\alpha_i\in\mathbb{R}, ~~K_i^d\in\mathbb{N},
\end{aligned}
\end{equation}
where $\sigma_i^2$ is the variance of $\bm S_i$, $\gamma_i^d$ and $\gamma_i^a$ are the average digital SNR and analog SNR, respectively, as expressed in (\ref{equation:digitalAnalogSNR}).
\par
Such coupled optimization objective is a constrained multi-variable function, which contains integral components. It is difficult to derive the optimal solution with general convex optimization methods. Inspired by the BCD method \cite{bersekas1999nonlinear}, we tackle such problem by dividing it into two coupled sub-problems: optimizing rate allocation under given resource allocation, and then optimizing resource allocation based on derived rate allocation.
\subsubsection{Rate Allocation}\label{subsubsection:Data Segmentation}
Under given resource allocation, the rate allocation problem is a relaxed problem of (\ref{equation:single-optimization2}), which can be expressed as
\begin{equation}\label{equation:single-optimization-sub1}
\begin{aligned}
&\min_{R_i}
&& ED_i(R_i|\alpha_i,K_i^d)\\
&~ \text{s.t.}
&& {\  0 \leq R_i \leq R_i^{th}}, R_i\in\mathbb{R}.
\end{aligned}
\end{equation}
\par
For this sub-problem, we can derive its analytical solution as follows.
\begin{theorem}
Let ($\alpha_i,K_i^d$) be a given digital-analog resource allocation strategy. Then the quantization rate $R_i^{*}$ is the optimal solution for the problem formulated in (\ref{equation:single-optimization-sub1}) if and only if
\begin{equation}\label{equation:rate-condition}
R_i^{*}=
    min([\frac{K_i^d}{K_i+K_i^d}{\rm log} \,(\frac{\pi e}{3}\frac{P_i\alpha_i}{{\  \sigma_w^2}L}\Psi(\gamma_i^a))]^+,~R_i^{th}).
\end{equation}
\end{theorem}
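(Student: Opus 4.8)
The plan is to regard the sub-problem (\ref{equation:single-optimization-sub1}) as the minimization of a smooth, strictly convex function of the single variable $R_i$ over the closed interval $[0,R_i^{th}]$, so that the optimum is found by solving the first-order stationarity condition on all of $\mathbb{R}$ and then clipping the resulting value to the feasible interval. Since $(\alpha_i,K_i^d)$ is held fixed, both $\gamma_i^d$ and $\gamma_i^a$ in (\ref{equation:digitalAnalogSNR}) are constants that do not depend on $R_i$ (recall $K_i^a=L$ and $K_i=K_i^d+L$). Substituting $R_i^t=LR_i/K_i^d$ into the objective of $\mathcal{P}2$, the function to minimize is
\begin{equation*}
f(R_i)=\frac{\sigma_i^2}{\gamma_i^d}\bigl(2^{LR_i/K_i^d}-1\bigr)+\frac{\pi e}{6}\,\sigma_i^2\,\Psi(\gamma_i^a)\,2^{-2R_i},
\end{equation*}
i.e., a constant plus two exponentials in $R_i$ whose coefficients are strictly positive, because $\Psi(\gamma_i^a)=\int_0^\infty(1+g_i\gamma_i^a)^{-1}p(g_i)\,dg_i>0$ by (\ref{equation:Psi-function}). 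Differentiating twice gives $f''(R_i)>0$ for every $R_i$, so $f$ is strictly convex on $\mathbb{R}$; consequently it has a unique minimizer over $[0,R_i^{th}]$, which is precisely why the characterization in the theorem can be stated as an ``if and only if''.

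Next I would solve $f'(R_i)=0$. Writing $2^x=e^{x\ln 2}$ and dividing out the common factor $\ln 2$, stationarity becomes
\begin{equation*}
\frac{L}{K_i^d}\,\frac{\sigma_i^2}{\gamma_i^d}\,2^{LR_i/K_i^d}=2\cdot\frac{\pi e}{6}\,\sigma_i^2\,\Psi(\gamma_i^a)\,2^{-2R_i},
\end{equation*}
equivalently $2^{R_i(K_i+K_i^d)/K_i^d}=\frac{\pi e}{3}\,\frac{\gamma_i^d K_i^d}{L}\,\Psi(\gamma_i^a)$, where I used $L/K_i^d+2=(K_i+K_i^d)/K_i^d$. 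Taking base-two logarithms and substituting $\gamma_i^d=\alpha_i P_i/(K_i^d\sigma_w^2)$, so that $\gamma_i^d K_i^d=\alpha_i P_i/\sigma_w^2$, the unconstrained minimizer is
\begin{equation*}
\widehat R_i=\frac{K_i^d}{K_i+K_i^d}\,\log\!\Bigl(\frac{\pi e}{3}\,\frac{P_i\alpha_i}{\sigma_w^2 L}\,\Psi(\gamma_i^a)\Bigr).
\end{equation*}

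Finally I would invoke strict convexity: the minimizer of $f$ over $[0,R_i^{th}]$ is the Euclidean projection of $\widehat R_i$ onto that interval, equal to $\widehat R_i$ when $\widehat R_i\in[0,R_i^{th}]$, to $0$ when $\widehat R_i<0$ (which happens exactly when the logarithm's argument falls below one), and to $R_i^{th}$ when $\widehat R_i>R_i^{th}$. This is exactly $R_i^{*}=\min\bigl([\widehat R_i]^+,R_i^{th}\bigr)$, i.e., (\ref{equation:rate-condition}), and the uniqueness noted above supplies the converse direction. I expect the only points needing care to be (i) verifying $\Psi(\gamma_i^a)>0$, which makes the $2^{-2R_i}$ coefficient strictly positive and thereby guarantees strict convexity and a meaningful clipping at $0$, and (ii) the algebraic bookkeeping that collapses the first-order condition into the stated closed form; everything else is standard one-dimensional convex minimization over a box.
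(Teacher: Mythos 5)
Your proposal is correct and follows essentially the same route as the paper: establish strict convexity of the one-variable objective via the positivity of its second derivative, solve the first-order condition for the unconstrained stationary point, and project onto $[0,R_i^{th}]$; your algebra (using $K_i=K_i^d+L$ so that $L/K_i^d+2=(K_i+K_i^d)/K_i^d$ and $\gamma_i^d K_i^d=\alpha_i P_i/\sigma_w^2$) correctly recovers the closed form in (\ref{equation:rate-condition}). The only difference is that you spell out the clipping and positivity of $\Psi(\gamma_i^a)$ more explicitly than the paper does, which is a modest improvement in rigor rather than a different method.
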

\begin{proof}
The second derivative of function $ED_i(R_i|\alpha_i,K_i^d)$ with respect to $R_i$ is $\ln^2 2\frac{L^2}{(K_i^d)^2}\frac{1}{\gamma_i^d}2^{R_{i}^t}\sigma_i^2+\frac{2 \pi e}{3}\ln^2 2\cdot2^{-2R_i}\sigma^2\Psi(\gamma_i^a)$, which is positive. Thus, such sub-problem is strictly convex, which has unique solution. This implies that this function can be minimized by finding the stationary point.
\par
Then we analyze the existence of the optimal solution. The first derivative of $ED_i(R_i|\alpha_i,K_i^d)$ with respect to $R_i$ can be derived as $\ln 2\frac{L}{K_i^d}\frac{1}{\gamma_i^d}2^{R_{i}^t}\sigma_i^2-\frac{\pi e}{3}\ln 2\cdot2^{-2R_i}\sigma_i^2\Psi(\gamma_i^a)$.
By equating the derivative to zero, we can derive the stationary point. And considering the constraint of rate, the optimal solution can be obtained as expressed in (\ref{equation:rate-condition}).
\end{proof}

\begin{remark}
According to Theorem 1, the optimal quantization rate $R_i$ might be zero. In this case, no information is allocated for digital transmission, which implies that the optimal scheme reduces to the purely analog transmission scheme.
\end{remark}
\par
For such a nonlinear optimization problem, the above analysis indicates that the rate allocation problem can be solved efficiently according to (\ref{equation:rate-condition}).

\subsubsection{Resource Allocation}\label{subsubsection:Resource Allocation}
Based on the solution of rate allocation, the problem of resource allocation can be written as
\begin{equation}\label{equation:single-optimization-sub2}
\begin{aligned}
&\underset{\alpha_i, K_i^d}{\text{min}}
&& ED_i(\alpha_i,K_i^d|R_i)\\
&~ \text{s.t.}
&& {\  0\leq\alpha_i\leq1},~~\alpha_i\in\mathbb{R}, \\
&&&~K_i^d + K_i^a = K_i, ~~K_i^d \in\mathbb{N}.
\end{aligned}
\end{equation}
\par
As assumed in \ref{subsection:resource allocation}, analog transmission occupies $K_i^a=L$ channel uses. Therefore, bandwidth allocation is determined by the dimension of the Gaussian vector, i.e., $K_i^d=K_i-L$. Next, we focus on deriving the necessary condition for the optimal power allocation coefficient via the analysis of the Karush-Kuhn-Tucker (KKT) condition \cite{Boyd2014convex}.

\begin{theorem}
Given the transmission resource constraints $P_i,K_i$ and the quantization rate $R_i$, let $\alpha_i^{v}$ satisfy
\begin{equation}\label{equation:ratio-stationary}
\frac{\pi e}{6}\frac{(\alpha_i^{v})^2P^2}{(K_i^d)^2 L {\sigma_w^4}}\frac{\partial \Psi (\gamma_i^a)}{\partial \gamma_i^a}+(2^{R_{i}^t}-1)2^{2R_i}=0,
\end{equation}
then the optimal digital-analog power allocation coefficient $\alpha_i^*$ of the problem formulated in (\ref{equation:single-optimization-sub2}) must satisfy
\begin{equation}\label{equation:ratio-condition}
\alpha_i^{*}=1+(\alpha_i^v-1)\cdot sgn([\alpha_i^v]^+).
\end{equation}
$sgn(x)$ is a sign function which equals one with $0\leq x\leq1$ and zero otherwise.
\end{theorem}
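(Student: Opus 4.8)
The plan follows the architecture of the proof of Theorem~1: reduce the problem to a one-dimensional convex program, identify its interior stationary point, and then clip that point to the feasible box. Since Section~\ref{subsection:resource allocation} fixes $K_i^a=L$, the only feasible bandwidth split is $K_i^d=K_i-L$, so with $P_i$, $K_i$ and $R_i$ held fixed the problem~(\ref{equation:single-optimization-sub2}) is exactly the minimization of $ED_i$ over the single variable $\alpha_i\in[0,1]$. First I would show that $ED_i$ is strictly convex in $\alpha_i$ on $(0,1]$: written as $\tfrac{(2^{R_i^t}-1)K_i^d\sigma_w^2\sigma_i^2}{P_i}\tfrac{1}{\alpha_i}$, the outage term is convex for $\alpha_i>0$; differentiating~(\ref{equation:Psi-function}) twice under the integral sign gives $\Psi'(x)=-\int_0^\infty\tfrac{g_i}{(1+g_ix)^2}p(g_i)\,dg_i<0$ and $\Psi''(x)=\int_0^\infty\tfrac{2g_i^2}{(1+g_ix)^3}p(g_i)\,dg_i>0$, so $\Psi$ is decreasing and strictly convex, and since $\gamma_i^a=\tfrac{(1-\alpha_i)P_i}{L\sigma_w^2}$ is affine in $\alpha_i$, $\Psi(\gamma_i^a)$ is convex in $\alpha_i$. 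Hence $ED_i$ is a sum of convex terms with a strictly convex summand, giving a unique minimizer; moreover, for $R_i>0$, $ED_i(\alpha_i)\to\infty$ as $\alpha_i\to0^+$, so the minimizer lies in $(0,1]$ and the KKT conditions of this convex program (Slater holds at $\alpha_i=\tfrac12$) are necessary and sufficient.

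Next I would write the KKT system for $\min_{0\le\alpha_i\le1}ED_i(\alpha_i)$: with multiplier $\mu\ge0$ on $\alpha_i\le1$ (the bound $\alpha_i\ge0$ is never active by the blow-up above), stationarity reads $\tfrac{d}{d\alpha_i}ED_i(\alpha_i)+\mu=0$. Differentiating by the chain rule through $\gamma_i^d$ and $\gamma_i^a$ of~(\ref{equation:digitalAnalogSNR}) gives $\tfrac{d}{d\alpha_i}ED_i=-\tfrac{(2^{R_i^t}-1)K_i^d\sigma_w^2\sigma_i^2}{P_i\alpha_i^2}-\tfrac{\pi e}{6}2^{-2R_i}\sigma_i^2\tfrac{P_i}{L\sigma_w^2}\tfrac{\partial\Psi(\gamma_i^a)}{\partial\gamma_i^a}$; the first term is $\le0$ and the second is $\ge0$ because $\partial\Psi/\partial\gamma_i^a<0$. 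Setting this to zero (the interior case $\mu=0$), clearing the $\alpha_i^{-2}$ factor and multiplying through by $2^{2R_i}$ turns it into exactly~(\ref{equation:ratio-stationary}); thus $\alpha_i^v$ is precisely the interior stationary point of $ED_i$, unique by strict convexity. I would also record that the left-hand side of~(\ref{equation:ratio-stationary}), with $\gamma_i^a$ re-expressed through $\alpha_i^v$, is continuous and strictly increasing in $\alpha_i^v$ on $[0,1]$ (the factor $(\alpha_i^v)^2$ increases, $|\partial\Psi/\partial\gamma_i^a|$ increases as $\gamma_i^a$ decreases, and $(2^{R_i^t}-1)2^{2R_i}\ge0$), so~(\ref{equation:ratio-stationary}) pins down $\alpha_i^v$ unambiguously.

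It remains to express the clipping of $\alpha_i^v$ to $[0,1]$ as~(\ref{equation:ratio-condition}). Since $\tfrac{d}{d\alpha_i}ED_i$ is strictly increasing and $\to-\infty$ as $\alpha_i\to0^+$, exactly one of two cases occurs: (i) $\tfrac{d}{d\alpha_i}ED_i|_{\alpha_i=1}>0$, so the unique zero $\alpha_i^v$ lies in $(0,1)$, $\mu=0$, and $\alpha_i^*=\alpha_i^v$; or (ii) $\tfrac{d}{d\alpha_i}ED_i|_{\alpha_i=1}\le0$, so $ED_i$ is non-increasing on $(0,1]$, the upper bound is active, $\alpha_i^*=1$, and the (extended) root satisfies $\alpha_i^v\ge1$. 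Finally I would check that $\alpha_i^*=1+(\alpha_i^v-1)\,sgn([\alpha_i^v]^+)$ recovers both: in case (i), $0<\alpha_i^v<1$ gives $[\alpha_i^v]^+=\alpha_i^v\in[0,1]$, so $sgn([\alpha_i^v]^+)=1$ and $\alpha_i^*=\alpha_i^v$; in case (ii), $\alpha_i^v\ge1$ gives $sgn([\alpha_i^v]^+)=0$ and $\alpha_i^*=1$. Note $\alpha_i^v\ge0$ automatically because $\tfrac{d}{d\alpha_i}ED_i<0$ near $0$, so the outer $[\cdot]^+$ only guards an already-excluded negative root; and the degenerate $R_i=0$ case, where~(\ref{equation:ratio-stationary}) forces $\alpha_i^v=0$, yields $\alpha_i^*=0$ — the purely analog scheme — again consistent with the formula.

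The step I expect to be most delicate is the \emph{implicit} nature of~(\ref{equation:ratio-stationary}): because $\gamma_i^a$ itself depends on the unknown $\alpha_i^v$, one cannot simply read off a root, and the case split above must be justified through the monotonicity-in-$\alpha_i^v$ argument rather than by inspection; getting the sign and domain bookkeeping of that monotone function right — including that it is well defined only where $\gamma_i^a\ge0$ — is the real work. A secondary technical point is justifying differentiation of $\Psi$ under the integral sign, routine by dominated convergence (for $x\ge0$, $g_i\ge0$ the integrand and its first two $x$-derivatives are bounded by $g_i^k e^{-g_i}$), but necessary since the whole convexity argument rests on $\Psi''>0$. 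Everything else — forming the Lagrangian, computing the chain-rule derivative, and rearranging it into~(\ref{equation:ratio-stationary}) — is routine and mirrors the proof of Theorem~1.
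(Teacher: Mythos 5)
Your proposal is correct and follows essentially the same route as the paper's proof: strict convexity of $ED_i$ in $\alpha_i$ (established via the second derivative, i.e., $\Psi''>0$ under the integral sign), the KKT conditions of the box-constrained one-variable problem, and clipping the interior stationary point to $[0,1]$ — your version is in fact more careful than the paper's about the inactive lower bound as $\alpha_i\to 0^+$ and the explicit case split that justifies the $sgn([\cdot]^+)$ formula. One minor remark: carrying your chain-rule derivative through and clearing the $\alpha_i^{-2}$ factor yields a stationarity condition with a single factor of $K_i^d$ (and $P_i^2$ rather than $P^2$) in the denominator, so the rearrangement does not land \emph{exactly} on (\ref{equation:ratio-stationary}) as printed; this appears to be a typographical slip in the paper's equation rather than a flaw in your argument.
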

\begin{proof}
According to the Leibniz integral rule, the second derivative of $D_i(\alpha_i|R_i,K_i^d)$ with respect to $\alpha_i$ is $2\frac{P_i^2}{L^2 \sigma_w^4}\frac{\pi e}{6}2^{-2R_i}\sigma_i^2\int_{0}^{\infty} \frac{g_i^2}{(1+g_i \gamma_i^a)^3}p(g_i)\,dg_i+2(2^{R_{i}^t}-1)\frac{P_i^2}{(K_i-L)^2\sigma_w^4}\frac{1}{(\gamma_i^d)^3}\sigma_i^2$, which is positive. This indicates that this sub-problem is strictly convex, and also has unique solution.
\par
Next we analyze whether an optimal point exists. The Lagrangian function of the problem formulated in (\ref{equation:single-optimization-sub2}) can be expressed as
\begin{equation}\label{equation:lagrangian-function}
\begin{split}
\mathcal{L}(\alpha_i,\mu_1,\mu_2)=&\frac{2^{R_{i}^t}-1}{\gamma_i^d}\sigma_i^2+\frac{\pi e}{6}2^{-2R_i}\sigma_i^2\Psi(\gamma_i^a)\\
&+\mu_1(-\alpha_i) + \mu_2(\alpha_i-1).
\end{split}
\end{equation}
\par
The optimal solution can be obtained by applying the KKT condition if and only if there exists Lagrange multipliers $\mu_1,\mu_2$ such that

\begin{subequations}
\begin{numcases}{}
\frac{\partial \mathcal{L}(\alpha_i,\mu_1,\mu_2)}{\partial \alpha_i}=f(\alpha_i^*)-\mu_1^*+\mu_2^* =0& \label{KKT_1}\\
\mu_1^*\alpha_i^*=0,\mu_2^*(\alpha_i^*-1)=0&\label{KKT_2}\\
\mu_1^*\geq0,\mu_2^*\geq0&\label{KKT_3}\\
0\leq\alpha_i^*\leq 1,&\label{KKT_4}
\end{numcases}
\end{subequations}
where $f(\alpha_i)$ can be written as
\begin{equation*}\label{equation:f}
\begin{aligned}
f(\alpha_i)=&-\frac{P_i}{L{\  \sigma_w^2}}\frac{\pi e}{6}2^{-2R_i}\sigma^2\frac{\partial \Psi (\gamma_i^a)}{\partial \gamma_i^a}\\
&-(2^{R_{i}^t}-1)\frac{P_i}{(K_i-L){\  \sigma_w^2}}\frac{1}{(\gamma_i^d)^2}\sigma_i^2.
\end{aligned}
\end{equation*}
\par
Substituting (\ref{KKT_1}) into (\ref{KKT_2}), the necessary condition for optimal $\alpha_i^*$ expressed as (\ref{equation:ratio-condition}) holds, based on (\ref{equation:ratio-stationary}).
\end{proof}
\begin{remark}
Observe that the solution $\alpha_i^*=1$ means no power is allocated to analog transmission, which implies the optimal scheme reduces to purely digital transmission. Such situation might occur when the quantization is fine enough so that the quantization residual is negligible to the signal reconstruction.
\end{remark}
\par
The above analysis suggests that the digital-analog power allocation problem can be solved based on (\ref{equation:ratio-condition}). Recognizing that the integral component contains variable $\alpha_i$, the computational complexity of obtaining $\alpha_i^*$ directly from (\ref{equation:ratio-stationary}) is high. However, the analysis also indicates that the formulation (\ref{equation:single-optimization-sub2}) is a constrained convex optimization problem, which can be addressed tractably through the gradient method.

\subsection{Solution}\label{subsection:single-solution}
According to the above analysis, we develop an iterative solution to jointly optimize rate and resource allocation as formulated in (\ref{equation:single-optimization2}). The algorithm is elaborated in detail in Algorithm \ref{alg:Optimal Strategy for Single-Gaussian Source}. Such iterative idea can be traced from the widely used BCD method proposed in \cite{bersekas1999nonlinear}. For the optimization of a multi-variable function in the BCD method, the coordinates of variables are first partitioned into blocks. Then at each iteration, the function is optimized in terms of one of the coordinate blocks while the other coordinates are fixed. With BCD, one can find the optimal solution with an acceptable convergence rate, even if the objective function is not (block) convex \cite{luo1993error}, \cite{grippo2000on} or differentiable \cite{tseng2001convergence}. The following theorem establishes the optimality of Algorithm 1 for the approximative problem $\mathcal{P}1$.

 \begin{algorithm}
  \caption{Joint Allocation Strategy for the Single Gaussian Source Scenario}
   \label{alg:Optimal Strategy for Single-Gaussian Source}
   \KwIn{$\sigma_i^2,~L,~K_i,~P_i,~\sigma_{ei}^2,~\epsilon,~\delta$}
   \KwOut{$\alpha_i^*,~R_i^*$}
   \textbf{initialize procedure:}\\
   $t=0$\;
   ${K_i^a}=L,{K_i^d}=K-L$\;
   $R_i^*=R_i^{(0)},~\alpha_i^*=\alpha_i^{(0)},~ED_i^{(0)}=+\infty$\;
   \SetKw{Kwend}{end initial procedure}
   \Kwend\;

   \Repeat{$||ED_i^{(t)}-ED_i^{(t-1)}|| \le \epsilon$}{
    $t = t+1$\;
    Calculate $R_i^*$ according to (\ref{equation:rate-condition})\;

    \eIf {$R_i^*=0$}{
    $\alpha_i^*=0$, Calculate $ED_i^{(t)}$ according to (\ref{equation:single-optimization2})\;
    }{
    $\tilde{ED}_i^{(0)} = +\infty, j = 0$\;
    \Repeat{$||\tilde{ED}_i^{(j)}-\tilde{ED}_i^{(j-1)}|| \leq \epsilon$}{
    $j = j+1$\;
    $\theta=-\partial ED_i(\alpha_i|R_i^*,K_i^d)/\partial \alpha_i |_{\alpha_i=\alpha_i^*}$\;
    $\alpha_i^*=\min\{[\alpha_i^*+\delta\theta]^+,1\}$\;
    Calculate $\tilde{ED}_i^{(j)}$ according to (\ref{equation:single-optimization2})\;
    }
    $ED_i^{(t)} = \tilde{ED}_i^{(j)}$\;
    }
    }
   \Return $\alpha^*_i,~R^*_i$\;
   \end{algorithm}

\par

\begin{theorem}
Algorithm 1 converges; and the converged solution $(R^*_i, \alpha^*_i)$ is the optimal
solution to the problem in (\ref{equation:single-optimization2}).
\end{theorem}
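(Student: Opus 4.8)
The goal is to establish both halves of Theorem 3 --- that the iterates produced by Algorithm \ref{alg:Optimal Strategy for Single-Gaussian Source} converge, and that the limit solves $\mathcal{P}2$ in (\ref{equation:single-optimization2}) --- by treating the algorithm as a two-block coordinate-descent scheme in the blocks $R_i$ and $\alpha_i$ (the third listed variable $K_i^d$ is not free: the analog-dimension convention $K_i^a=L$ forces $K_i^d=K_i-L$). First I would record that the feasible set $[0,R_i^{th}]\times[0,1]$ is compact and that $ED_i(\cdot,\cdot)$ is continuous and nonnegative on it, hence bounded below. At outer iteration $t$ the rate block is replaced by the exact minimizer $R_i^{*}$ of the strictly convex sub-problem (\ref{equation:single-optimization-sub1}) --- exactness being guaranteed by Theorem 1 and the closed form (\ref{equation:rate-condition}) --- and the power block is then replaced by the output of the inner projected-gradient loop on the strictly convex sub-problem (\ref{equation:single-optimization-sub2}). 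Each substitution is non-increasing in $ED_i$, so $\{ED_i^{(t)}\}$ is non-increasing and bounded below, hence convergent; in particular the stopping test $\|ED_i^{(t)}-ED_i^{(t-1)}\|\le\epsilon$ triggers after finitely many iterations, which is the convergence half of the claim.

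To make this rigorous I would next verify that the inner loop actually returns a point arbitrarily close to the minimizer of (\ref{equation:single-optimization-sub2}). Theorem 2's second-derivative bound shows $ED_i(\alpha_i\mid R_i,K_i^d)$ is strictly convex on the compact interval $[0,1]$ with Lipschitz-continuous derivative there; the update $\alpha_i^{*}\leftarrow\min\{[\alpha_i^{*}+\delta\theta]^{+},1\}$ with $\theta=-\partial ED_i/\partial\alpha_i$ is exactly projected gradient descent onto $[0,1]$, so for a fixed step $\delta$ smaller than the reciprocal of that Lipschitz constant it converges monotonically to the unique constrained minimizer characterized by (\ref{equation:ratio-condition}). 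Combined with the monotonicity above, the inner stopping test $\|\tilde{ED}_i^{(j)}-\tilde{ED}_i^{(j-1)}\|\le\epsilon$ also triggers, and the $R_i^{*}=0$ branch (Remark 1) is handled directly since it fixes $\alpha_i^{*}=0$ and evaluates $ED_i$ in closed form.

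For the optimality half I would invoke the standard two-block BCD result (Bertsekas \cite{bersekas1999nonlinear}, or Grippo--Sciandrone \cite{grippo2000on}): since $ED_i$ is continuously differentiable on the box, each block sub-problem is strictly convex with a unique minimizer, and there are only two blocks, every limit point $(R_i^{*},\alpha_i^{*})$ of the iterates is a coordinate-wise minimum, and on a box (product) feasible set a coordinate-wise minimum of a $C^1$ function is a KKT point of $\mathcal{P}2$. To upgrade ``KKT point'' to ``global minimizer'' I would argue that the stationary point is unique: conditions (\ref{equation:rate-condition}) and (\ref{equation:ratio-condition}) determine $(R_i^{*},\alpha_i^{*})$ uniquely once the active-constraint pattern is fixed, and a short case check over the finitely many patterns --- precisely the ones already isolated by the $[\cdot]^{+}$, $\min(\cdot,R_i^{th})$ and $\mathrm{sgn}$ clippings in Theorems 1--2, using the monotonicity of $\Psi$ from (\ref{equation:Psi-function}) --- shows there is exactly one stationary point in the compact box; since $ED_i$ attains its minimum on that box and the minimizer is necessarily stationary, this unique stationary point is the global minimizer.

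The main obstacle is precisely this last step, because $\mathcal{P}2$ is not jointly convex --- the outage term $(2^{LR_i/K_i^d}-1)/\gamma_i^d$ loses joint convexity in $(R_i,\alpha_i)$ for small $R_i$ --- so ordinary convex-optimization reasoning does not directly promote stationarity to global optimality. I expect the proof to lean on the uniqueness-of-stationary-point argument sketched above; should uniqueness resist a fully general proof, the natural fallback is to state the conclusion within the low-outage, fine-quantization operating regime that underlies the very derivation of $\mathcal{P}2$ in Section IV, where $\tau_i\to 0$ and $\sigma_{ei}^2/\sigma_i^2\ll 1$ and the stationary point can be pinned down uniquely. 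The remaining ingredients --- compactness, monotone decrease, boundedness below, inner-loop convergence, and the BCD limit-point theorem --- are routine once Theorems 1 and 2 are in hand.
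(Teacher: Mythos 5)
Your proposal follows the same basic route as the paper, but the paper's entire proof is a two-sentence citation: it asserts convergence and optimality ``based on the fact that each sub-problem has unique solution and the constraint set is a Cartesian product of closed convex sets,'' and defers all details to \cite{bersekas1999nonlinear}. You reconstruct exactly that BCD argument but actually supply the missing pieces: the monotone-descent/boundedness argument for convergence of $\{ED_i^{(t)}\}$, the verification that the inner projected-gradient loop with a sufficiently small step $\delta$ returns the exact minimizer of (\ref{equation:single-optimization-sub2}) rather than an arbitrary descent point (something the paper never checks, even though the cited BCD theorem requires exact block minimization), and the correct statement of what the Bertsekas result actually delivers for a non-convex objective, namely that every limit point is a coordinate-wise (stationary) point, not a global minimizer.

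That last distinction is the substantive point, and you are right to flag it: $ED_i$ in (\ref{equation:single-optimization2}) is strictly convex in each block separately (Theorems 1 and 2) but not jointly convex in $(R_i,\alpha_i)$, so the two-block BCD theorem only certifies stationarity, and the paper's claim of global optimality does not follow from the citation alone. Your proposed repair --- showing the stationary point in the box $[0,R_i^{th}]\times[0,1]$ is unique by a case check over the active-constraint patterns in (\ref{equation:rate-condition}) and (\ref{equation:ratio-condition}), so that the attained minimum must coincide with it --- is the natural way to close this, but as you acknowledge it is only sketched, and the uniqueness is not obvious (the fixed-point system coupling $R_i^*$ and $\alpha_i^*$ through $\Psi(\gamma_i^a)$ could in principle admit multiple interior solutions). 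So your write-up is, if anything, more honest than the paper's: the convergence half is fully established, while the global-optimality half rests on a uniqueness claim that neither you nor the paper actually proves.
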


This theorem can be proved, based on the fact that each sub-problem has unique solution and the constraint set is a Cartesian product of closed convex sets \cite{bersekas1999nonlinear}. The details of proof can be found in \cite{bersekas1999nonlinear}. In Algorithm \ref{alg:Optimal Strategy for Single-Gaussian Source}, we implement the gradient method to solve the problem of power allocation (Lines 13-19). The power allocation coefficient is evolved on the gradient descent direction $\theta$. Note that in the process of rate allocation and power allocation, the expectation of distortion as defined in (\ref{equation:single-optimization2}) needs to be computed repeatedly. To reduce the complexity of calculating integral component $\Psi(\gamma_i^a)$, we further analyze the integral and adopt a lookup table for it.
\par
The fading gain $g_i$ is exponentially distributed with pdf $p(g_i)=e^{-g_i}$ as analyzed in \ref{subsection:Problem Analysis}. Thus the function $\Psi(x)$ can be rewritten as $\Psi(x)=\int_{0}^{\infty} \frac{1}{1+g_ix}e^{-g_i}\,dg_i$.
Further, we can derive it as follows,
\begin{eqnarray*}\label{equation:integral-component}
\begin{split}
\Psi(\frac{1}{x})&=\int_{0}^{\infty} \frac{x}{x+g_i}e^{-g_i}\,d\omega=xe^{x}\int_{x}^{\infty} \frac{1}{g_i}e^{-g_i}\,dg_i\\
&=x(-e^xEi(-x)),
\end{split}
\end{eqnarray*}
where $Ei(-x)$ is the widely known exponential integral function. Therefore, a lookup table can be implemented to calculate the function $\Psi(x)$, which can further reduce the computational complexity.

\section{Joint Rate and Resource Allocation for Multivariate Gaussian Source Scenario}\label{section:multi-gaussian source}
In this section, we further extend our work to the multivariate Gaussian source scenario. In this case, the variance of each Gaussian vector, which characterizes the transmission priority, would accordingly affect both {rate} allocation and resource allocation. Thus, we investigate the joint optimization problem with consideration of not only the fading distribution, but also source characteristics differences. Finally, we propose an efficient algorithm combining rounding and greedy strategies.
\subsection{Problem Formulation}\label{problemformulation}
In this scenario, suppose a set of zero mean complex Gaussian vectors are grouped as $\bm S$ with the correlation matrix ${\bm \Sigma_S}$. And ${\bm \Sigma_S}$ is diagonal with the non-zero entry $\sigma_i^2$, $1\leq i \leq m$. If the vectors are correlated, we can diagonalize the matrix as $\bm{TS}$, where $\bm{T}$ is the Karhunen-Loeve transforming matrix. Without a loss of generality, we suppose $\sigma_1^2\geq\sigma_2^2\geq...\geq\sigma_m^2$. The total power and bandwidth of the HDA system are $P$ and $K$, while the average noise power per channel use is $\sigma_w^2$, as assumed in \ref{subsection:source and channel}. As analyzed in Section \ref{section:single gaussian source}, rate allocation and digital-analog resource allocation of each vector component should be optimized jointly to minimize expected system distortion. Moreover, due to the difference of source characteristics, the optimization of power allocation $\bm{P}=(P_1,...,P_m)$, as well as bandwidth allocation $\bm{K}=(K_1,...,K_m)$ among vectors components, should also be considered. Mathematically, such problem can be formulated as follows.
\begin{equation}\label{equation:muti-optimization1}
\begin{aligned}
\!\!\!\mathcal{P}3:~
&\min_{\bm{P},\bm{K},\bm{R},\bm{\alpha},\bm{K^d}}
&& ED=\sum_{i=1}^mED_i\\
& ~~~~~\text{s.t.}
&& \!\sum_{i=1}^mK_i \leq K, ~~\sum_{i=1}^mP_i \leq P, \\
&&& L < K_i, ~~K_i\in\mathbb{N},\\
&&& 0\leq P_i, ~~P_i\in\mathbb{R},\\
&&& 0\leq R_i\leq R_i^{th}, ~~0\leq \alpha_i \leq 1,\\
&&& K^d_i + K^a_i = K,~~K^d_i \in\mathbb{N},\\
&&& \!R_i,\alpha_i\in\mathbb{R}, ~~\forall i\in \{1,...,m\},
\end{aligned}
\end{equation}
where $ED_i$ is expected system distortion of $\bm S_i$ as (\ref{equation:single-optimization2}) expressed. $K_i$ and $P_i$ are the corresponding channel bandwidth and power, respectively. $R_i$ and ($\alpha_i,K_i^d$) represent rate allocation and digital-analog resource allocation corresponding to $\bm S_i$, respectively, which should satisfy the corresponding constraints.
\subsection{Problem Analysis}
Due to the consideration of source characteristics differences, the problem formulated in (\ref{equation:muti-optimization1}) is hard to be solved. To tame the complexity, we decompose the problem and propose a two-stage algorithm inspired by the perspective of formulated problem expression. One stage is intra-component optimization, namely distributedly optimizing {rate} and digital-analog resource allocation $R_i,(\alpha_i,K^d_i)$, based on given resource assignment $P_i,K_i$. And the other stage is inter-component optimization, which obtains the resource allocation vectors $\bm{P},\bm{K}$ among independent components, given the result of the first stage. Note that the process of the first stage could be viewed as the joint optimization for the single Gaussian source scenario (i.e., $m=1$). Hence, it could be solved asymptotically referring to the approach in Section \ref{section:single gaussian source}.
\par
In the subsequent discussion, we will focus on the analysis of the second stage. Since the problem of resource allocation among vectors is an MINLP problem, it is quite challenging to solve it directly. We tame the complexity by first relaxing the integer constraint. Thus, the slack problem can be formulated as follows
\begin{equation}\label{equation:muti-optimization2}
\begin{aligned}
\mathcal{P}4:~
&\min_{\bm{P},\bm{K}}
&& ED=\sum_{i=1}^mED_i\\
&~ \text{s.t.}
&& \sum_{i=1}^mK_i \leq K,~~ \sum_{i=1}^mP_i \leq P \\
&&& L < K_i,0 < P_i,~~ K_i, P_i\in\mathbb{R}.\\
\end{aligned}
\end{equation}
\begin{theorem}
{The slack optimization problem as formulated in (\ref{equation:muti-optimization2}) is a multivariate optimization problem, which is constrained convex for the vector $\bm{K}$ and $\bm{P}$.}
\end{theorem}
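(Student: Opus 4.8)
The plan is to verify the two properties that make $\mathcal{P}4$ a constrained convex program: that its feasible set is convex and that the objective $ED=\sum_{i=1}^m ED_i$ is a jointly convex function of $(\bm{K},\bm{P})$. The feasible set is settled immediately, since every constraint in (\ref{equation:muti-optimization2}) --- the budget inequalities $\sum_i K_i\le K$, $\sum_i P_i\le P$ together with the bounds $L<K_i$ and $0<P_i$ --- is linear in $(\bm{K},\bm{P})$, so the region is polyhedral and hence convex. For the objective, after substituting $K_i^a=L$ and $K_i^d=K_i-L$ and treating the intra-component quantities $\alpha_i\in[0,1]$, $R_i\in[0,R_i^{th}]$ as fixed parameters, each term $ED_i$ in (\ref{equation:single-optimization2}) depends on $(K_i,P_i)$ only, and distinct indices involve disjoint variables; so it suffices to show each $ED_i$ is jointly convex in its own $(K_i,P_i)$ on the open set $\{K_i>L,\ P_i>0\}$. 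I would split $ED_i$ into its analog part $T_2=\frac{\pi e}{6}2^{-2R_i}\sigma_i^2\,\Psi(\gamma_i^a)$ and its digital part $T_1=\frac{2^{R_i^t}-1}{\gamma_i^d}\sigma_i^2$, and treat them in turn.

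For $T_2$: since $\gamma_i^a=(1-\alpha_i)P_i/(L\sigma_w^2)$ is affine in $P_i$ and independent of $K_i$, it is enough that $\Psi$ be convex on $[0,\infty)$. Differentiating (\ref{equation:Psi-function}) twice under the integral sign (legitimate since the integrand and its $x$-derivatives are dominated by integrable functions on compact $x$-intervals) gives $\Psi''(x)=\int_0^\infty\frac{2g^2}{(1+gx)^3}e^{-g}\,dg\ge 0$ --- precisely the integral already met in the proof of Theorem 2. Hence $\Psi$ is convex, $T_2$ is a convex function composed with an affine map, and its Hessian in $(K_i,P_i)$ is positive semidefinite (rank one, supported on the $P_i$ axis).

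For $T_1$: writing $u=K_i-L>0$ and $v=P_i>0$, one has $T_1=C\,g(u)/v$ with $C=\sigma_i^2\sigma_w^2/\alpha_i>0$ and $g(u)=u(2^{LR_i/u}-1)$; with the shorthand $b=LR_i\ln 2\ge 0$ this is $g(u)=u(e^{b/u}-1)$. A direct computation gives $g(u)\ge 0$, $g'(u)=e^{b/u}(1-b/u)-1$ and $g''(u)=e^{b/u}b^2/u^3\ge 0$, the positivity invoked in the proof of Theorem 1. The Hessian of $(u,v)\mapsto g(u)/v$ on $(0,\infty)^2$ equals, up to the positive factor $1/v$, the symmetric matrix with diagonal entries $g''(u)$ and $2g(u)/v^2$ and off-diagonal entry $-g'(u)/v$; since $g\ge 0$ and $g''\ge 0$, this matrix is positive semidefinite exactly when $2g(u)g''(u)\ge (g'(u))^2$. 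Adding the Hessian of $T_2$ only enlarges the lower-right entry of the Hessian of $ED_i$ and therefore cannot spoil positive semidefiniteness, so the whole theorem reduces to the single scalar inequality $2g(u)g''(u)\ge (g'(u))^2$ for $u>0$.

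To establish it, put $w=b/u\in(0,\infty)$, so that $2g(u)g''(u)=2w^2e^w(e^w-1)$ and $(g'(u))^2=\bigl(e^w(1-w)-1\bigr)^2$. Starting from the identity $e^w(1-w)-1=-\int_0^w se^s\,ds$, Cauchy--Schwarz gives $\bigl(\int_0^w se^s\,ds\bigr)^2\le\bigl(\int_0^w e^s\,ds\bigr)\bigl(\int_0^w s^2e^s\,ds\bigr)=(e^w-1)\bigl(e^w(w^2-2w+2)-2\bigr)$, and it remains to note the elementary bound $e^w(w^2-2w+2)-2\le 2w^2e^w$, equivalently $e^w(2-2w-w^2)\le 2$, which holds because the left-hand side equals $2$ at $w=0$ and has derivative $-w(4+w)e^w\le 0$. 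Chaining the two estimates yields $(g'(u))^2\le 2g(u)g''(u)$, finishing the argument; the boundary case $\alpha_i=0$, which by Remark 2 forces $R_i=0$ and hence $g\equiv 0$ and $T_1\equiv 0$, is trivial. I expect this scalar inequality to be the crux: joint convexity of $g(u)/v$ does not follow from convexity of $g$ alone (for instance $u/v$ is not jointly convex), so $2gg''\ge (g')^2$ genuinely has to be proved, and the integral-representation-plus-Cauchy--Schwarz route above is the cleanest way I see to avoid a brute-force analysis of the transcendental function $w\mapsto 2w^2e^w(e^w-1)-(e^w(1-w)-1)^2$.
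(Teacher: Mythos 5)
Your proof is correct, and its skeleton matches the paper's: both reduce the theorem to positive semidefiniteness of each $2\times 2$ Hessian of $ED_i$ in $(K_i,P_i)$, and in both cases the determinant condition collapses to the same scalar inequality --- your $2g(u)g''(u)\ge (g'(u))^2$ is, after the substitution $w=R_i^t\ln 2$, exactly the paper's claim that $\Upsilon(x)=2\ln^2 2\cdot 2^x x^2(2^x-1)-[(2^x-1)-2^x x\ln 2]^2\ge 0$ for $x\ge 0$. Where you differ is in how that crux is handled. The paper computes the full Hessian determinant of $ED_i$ in one block, peels off the nonnegative integral contribution of the analog term, and then simply asserts that $\Upsilon$ is ``monotonically non-decreasing with $\Upsilon(0)=0$'' without proof; you instead split $ED_i$ cleanly into the analog part (convex because $\Psi''\ge 0$ composed with an affine map, contributing a PSD diagonal correction) and the digital part in perspective-like form $g(u)/v$, and you actually \emph{prove} the scalar inequality via the representation $e^w(1-w)-1=-\int_0^w se^s\,ds$, Cauchy--Schwarz, and the elementary bound $e^w(2-2w-w^2)\le 2$. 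That argument checks out step by step (I verified $g'$, $g''$, the integral identities, and the sign of the derivative of $e^w(2-2w-w^2)$), so your write-up supplies the one detail the paper leaves unproved, and it correctly flags the genuine subtlety that convexity of $g$ alone does not give joint convexity of $g(u)/v$. The only cosmetic caveats: the boundary case $\alpha_i=0$ you dispatch via Remark 2 is also glossed over in the paper, and you should state explicitly (as you implicitly use) that the sum of the two PSD per-term Hessians and then the sum over $i$ (separable variables) preserves positive semidefiniteness, plus the polyhedral feasible set, to conclude constrained convexity of $\mathcal{P}4$.
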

\begin{proof}
According to (\ref{equation:single-optimization2}), we can obtain the Hessian matrix of the function $ED_i$ with respect to $P_i$ and $K_i$. And the determinant of the Hessian matrix can be written as
\begin{equation}\label{equation:hessian matrix}
\begin{split}
H(ED_i)=&\frac{{\sigma_w^4}}{P_i^4 \alpha_i^2}2\ln^22\cdot 2^{R_{t}^t}({R_{i}^t})^2(2^{R_{i}^t}-1)\sigma^4_i\\
&-\frac{{\sigma_w^4}}{P_i^4 \alpha_i^2}{[(2^{R_{i}^t}-1)-2^{R_{i}^t}R_{i}^t\ln2]}^2\sigma^4_i\\
&+\frac{\pi e}{3}\ln^22 \frac{1}{P_i \alpha_i}\frac{1}{K_i-L}\frac{(1-\alpha_i)^2}{L^2{\  \sigma_w^2}}2^{R_{i}^t}({R_{i}^t})^22^{-2R_i}\\
&\cdot\sigma_i^4\int_{0}^{\infty}\frac{g_i^2}{(1+g_i\gamma^a_i)^3}p(g_i)\,dg_i.
\end{split}
\end{equation}
\par
Since $\gamma^a_i>0$ and $g_i \geq 0$, the integral part of the above function is non-negative. Then the residual part of $H(ED_i)$ can be represented as $\frac{N^2}{P_i^4 \alpha_i^2}\sigma^4_i\Upsilon(R^t_i)$, where the function $\Upsilon(x)$ can be expressed as:
\begin{equation}\label{equation:hessian matrix1}
\Upsilon(x)=2\ln^22\cdot 2^xx^2(2^x-1)-{[(2^x-1)-2^xx\ln2]}^2.
\end{equation}
\par
It can be proved that $\Upsilon(x)$ is a monotonically non-decreasing function when $x\geq0$, and $\Upsilon(0)=0$. Hence, based on the above analysis, $H(ED_i)$ is non-negative. Besides, it can be derived that $\frac{\partial^2 D_i}{\partial P_i^2}\geq0$ and $\frac{\partial^2 D_i}{\partial K_i^2}\geq0$. Thus, $D_i$ is a convex function for $P_i$ and $K_i$.
Further, we can verify that the problem as formulated in (\ref{equation:muti-optimization2}) is a constrained convex optimization problem for the vector $\bm{K}$ and $\bm{P}$ \cite{Boyd2014convex}.
\end{proof}
The above analysis suggests that the slack problem can be solved by general convex optimization methods, such as the interior-point method \cite{wright1997primal}, which is an efficient solution for a nonlinear constrained convex optimization problem. Based on the fractional results, we need to further obtain a feasible integer bandwidth solution. The rounding technique is first employed and then the residual channel uses are reallocated wisely.
\par
Intuitively, more channel uses should be allocated to the Gaussian vector with larger significance for better performance. If this is true, we may allow the Gaussian vector with larger variance to retain more channel uses at the reallocated step. Next, we will present a lemma to verify our conjecture.
\begin{lemma}
Give the power allocation vector $\bm{P}$, as well as rate allocation and digital-analog resource allocation strategies, to reduce system expected distortion, the bandwidth allocation vector must satisfy: the Gaussian vector with larger variance is allocated with more channel uses than the vector with smaller variance. In other words, suppose the set of variances denoted by ($\sigma_1^2,...,\sigma_m^2$) is in the descending order (i.e. $\sigma_1^2\geq\sigma_2^2\geq...\geq\sigma_m^2$), and the corresponding bandwidth allocation solution is $\bm{K}^*=(K_1^*,...K_m^*)$. Then, $\bm{K}^*$ should satisfy:
\begin{equation}\label{equation:bandwidth-optimal-condition}
K_1^*\geq K_2^*\geq...\geq K_m^*.
\end{equation}
\end{lemma}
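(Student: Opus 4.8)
The plan is to prove the lemma by contradiction through an exchange (rearrangement) argument. Suppose the optimal bandwidth allocation $\bm{K}^*$, together with the given power vector $\bm{P}$ and the given rate and digital-analog strategies, violates (\ref{equation:bandwidth-optimal-condition}); then there exist indices $p<q$ with $\sigma_p^2\ge\sigma_q^2$ but $K_p^*<K_q^*$. I would construct a feasible modification whose total distortion $ED=\sum_i ED_i$ is strictly smaller, contradicting optimality.

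Two structural facts drive the argument, both read off from (\ref{equation:single-optimization2}). First, \emph{proportionality}: substituting $\gamma_i^d=\alpha_iP_i/((K_i-L)\sigma_w^2)$, $\gamma_i^a=(1-\alpha_i)P_i/(L\sigma_w^2)$ and $R_i^t=LR_i/(K_i-L)$ shows $ED_i=\sigma_i^2\,g(K_i,P_i,R_i,\alpha_i)$, where $g$ is the \emph{same} function for every component and carries no dependence on $\sigma_i^2$. Second, \emph{monotonicity in bandwidth}: the analog part of $g$ depends on $K_i$ only through $\gamma_i^a$, which does not involve $K_i$ since $K_i^a=L$ is fixed; and the outage part of $g$ equals $\frac{LR_i\sigma_w^2}{\alpha_iP_i}\cdot\frac{2^u-1}{u}$ with $u=LR_i/(K_i-L)$, where $(2^u-1)/u$ is strictly increasing in $u>0$ and hence strictly decreasing in $K_i$. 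Thus $ED_i$ is strictly decreasing in $K_i$ for fixed $(P_i,R_i,\alpha_i)$; strict convexity in $K_i$, should a second-order version be needed, is already contained in the proof of Theorem 4.

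The exchange I would perform keeps each component's own power, rate and split and only swaps the two bandwidths, $K_p^*\leftrightarrow K_q^*$. Feasibility is immediate: $\sum K_i$ and $\sum P_i$ are unchanged, and the bounds $K_i>L$, $0\le\alpha_i\le1$, $0\le R_i\le R_i^{th}$ and the identity $K_i^a+K_i^d=K_i$ are preserved. By proportionality the change in total distortion is $\Delta=-\sigma_p^2A+\sigma_q^2B$, where $A:=g(K_p^*,P_p,R_p,\alpha_p)-g(K_q^*,P_p,R_p,\alpha_p)>0$ and $B:=g(K_p^*,P_q,R_q,\alpha_q)-g(K_q^*,P_q,R_q,\alpha_q)>0$ are the (positive, because $g$ decreases in bandwidth and $K_p^*<K_q^*$) bandwidth-induced decrements of the per-variance distortion for the two components. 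Since $\sigma_p^2\ge\sigma_q^2$, it is enough to show $A\ge B$, i.e.\ that the decrement of $g$ over $[K_p^*,K_q^*]$ is at least as large under component $p$'s power and rate as under component $q$'s; this gives $\Delta<0$ and the contradiction. The case $\sigma_p^2=\sigma_q^2$ needs nothing, since (\ref{equation:bandwidth-optimal-condition}) is non-strict there. For intuition, swapping the \emph{entire} bundles of $p$ and $q$ instead yields the cleaner identity $\Delta=(\sigma_p^2-\sigma_q^2)\big(g(K_p^*,P_p,R_p,\alpha_p)-g(K_q^*,P_q,R_q,\alpha_q)\big)$, which already displays the rearrangement structure.

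The main obstacle is exactly the inequality $A\ge B$. Because the companion allocations $(P_i,R_i,\alpha_i)$ differ from component to component, ``$g$ is decreasing in $K_i$'' with the other arguments held fixed does not by itself compare decrements evaluated at different $(P,R,\alpha)$. I would settle this from the only $K_i$-dependent piece of $g$, namely the outage term $\frac{\sigma_w^2}{\alpha_iP_i}(K_i-L)\big(2^{LR_i/(K_i-L)}-1\big)$, by showing that its bandwidth decrement varies monotonically in the direction needed, using the same second-order estimates (on $\partial^2 D_i/\partial K_i^2$ and on the auxiliary function $\Upsilon$) that already appear in the proof of Theorem 4; alternatively, one can first prove the companion fact that the optimal power vector $\bm{P}$ is co-monotone with the variances and feed that ordering into the decrement comparison. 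This monotone-decrement estimate is the single genuinely non-routine step; the rest is bookkeeping.
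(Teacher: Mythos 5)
Your route is essentially the paper's route: the only $K_i$-dependent piece of $ED_i$ in (\ref{equation:single-optimization2}) is the outage term (the analog term sees $K_i$ only through $K_i^a=L$, which is fixed), that term factors as $\sigma_i^2$ times a quantity strictly decreasing in $K_i$, and the ordering (\ref{equation:bandwidth-optimal-condition}) is extracted by a rearrangement/exchange argument. Your observation that the outage term equals $\frac{LR_i\sigma_w^2}{\alpha_iP_i}\cdot\frac{2^u-1}{u}$ with $u=LR_i/(K_i-L)$, and that $(2^u-1)/u$ is increasing in $u$, is exactly the paper's monotonicity claim for $y(x)=b(2^{a/x}-1)x$; and the ``full bundle swap'' identity you give for intuition is precisely what the paper's appeal to the Hardy--Littlewood--P\'olya rearrangement inequality delivers: the multiset of values $\Lambda_i=\frac{2^{R_i^t}-1}{\gamma_i^d}$ must be ordered oppositely to the variances.

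The gap is the one you flag yourself and then do not close: the inequality $A\ge B$ --- that the decrement of the per-variance distortion over $[K_p^*,K_q^*]$ is at least as large under component $p$'s companion parameters $(P_p,R_p,\alpha_p)$ as under component $q$'s --- is asserted to be settleable ``using the same second-order estimates'' but is never proved, and it does not follow from monotonicity or from the Hessian computations in Theorem 4 alone, because it compares decrements of \emph{different} one-variable functions. Without it your pairwise swap does not yield $\Delta<0$, so the contradiction is not obtained. For what it is worth, the paper's own proof finesses exactly this point: it passes from ``the $\Lambda$ values must be ascending'' to ``the $K^*$ values must be descending'' via the monotone decrease of the single function $y$, a step that is airtight only when the constants $a=LR_i$ and $b=\sigma_w^2/(\alpha_iP_i)$ coincide across components (which is also the only regime in which your $A\ge B$ is immediate); in the paper the lemma is then used as a greedy guideline in Algorithm 2 rather than as a load-bearing optimality certificate. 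So you have correctly located the soft spot, but as written your argument is a proof sketch with the decisive inequality outstanding; to finish you must either prove the decrement comparison (which will require an ordering assumption or companion result on how $P_i,R_i,\alpha_i$ co-vary with $\sigma_i^2$) or restrict to the single-function setting the paper implicitly uses.
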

\begin{proof}
According to the analysis in \ref{subsection:Problem Analysis}, the component of expected system distortion, which is associated with the results of bandwidth allocation, is $\sum_{i=1}^{i=M}\frac{2^{R_{i}^t}-1}{\gamma^d_i}\sigma^2_i$.
\par
Define $\Lambda(K_i^*)=\frac{2^{R_{i}^t}-1}{\gamma^d_i}=\frac{(K_i^*-L){\  \sigma_w^2}}{\alpha_iP_i}(2^{\frac{L}{K_i^*-L}R_i}-1)$. {\  In fact, this can be formulated into the rearrangement inequality problem \cite{hardy1994inequalities}. Note that $\sigma_1^2\geq\sigma_2^2\geq...\geq\sigma_m^2$. Thus, according to \cite{hardy1994inequalities}, to minimize the above distortion, the values of $\Lambda$ should be in the ascending order, i.e.,}
\begin{equation}\label{equation:ascending-order}
\Lambda(K_1^*)\leq\Lambda(K_2^*)\leq...\leq\Lambda(K_m^*).
\end{equation}
Define $y(x)=b(2^{\frac{a}{x}}-1)x$, where $a,b$ are positive constants and $x>0$. When $a=LR_i$ and $b=\frac{{\  \sigma_w^2}}{\alpha_iP_i}$, then $\Lambda(K_i^*)=y(K_i^*-L)$.
The {\  derivative} of $y(x)$ can be obtained as $y'(x)=b(2^{\frac{a}{x}}-1-\frac{a}{x}\ln2\cdot2^{\frac{a}{x}})$. Let $t=\frac{a}{x}>0$ and $g(t)=2^t-1-t\ln2\cdot2^t$, further we can derive that $g(t)<\lim_{t\to0}g(t)=0$. Hence, $y(x)$ is a monotonically decreasing function, which means (\ref{equation:ascending-order}) is equivalent to (\ref{equation:bandwidth-optimal-condition}).
\end{proof}
Lemma 1 gives an efficient greedy guideline for bandwidth reallocation after rounding the fractional results.
\subsection{Solution}\label{subsection:muti-solution}
Based on the above analysis, we propose an efficient two-stage algorithm to solve such an MINLP resource allocation problem. The proposed algorithm utilizes the convexity of the slack problem, where variables $(K_1,...K_m)$ can range continuously. After getting the fractional values, we further integrate rounding and greedy strategies to reallocate the bandwidth for a feasible integer solution. The algorithm is elaborated in Algorithm \ref{alg:Optimal Strategy for multivariate Gaussian Source}.
\par
In Algorithm \ref{alg:Optimal Strategy for multivariate Gaussian Source}, we first employ the interior-point method, which can guarantee the convergence of  the optimal solution to the convex slack problem (Lines 2-8). Define the penalty function of $\mathcal{P}4$ in the $j^{th}$ iteration as $\Phi(\bm{P},\bm{K},e^{(j)})$, which can be expressed as:
\begin{equation}\label{equation:penalty function}
\Phi(\bm{P},\bm{K},e^{(j)})=ED(\bm{P},\bm{K})-e^{(j)}\sum_{u=1}^{2m+2}\varphi (g_u(\bm{P},\bm{K})),
\end{equation}
where $g_u(\bm{P},\bm{K})\leq 0$ is the $u^{th}$ constraint function of $\mathcal{P}4$. $\varphi(x)$ is the penalty function, which has two alternatives, i.e., $\varphi(x) = 1/x$ or $\varphi(x) = \ln (-x)$. $e^{(j)}$ is the penalty factor of the $j^{th}$ iteration, which should be a positive number. The resource allocation vector is evolved from the initial value of equal power and bandwidth allocation. During the iteration, the value of the penalty factor is updated as $e^{(j)}=Ce^{(j-1)}$. The choice of $C$ should be between 0 and 1 to satisfy that $e^{(j)}$ is in the descending order with iterations and $\lim\limits_{j\to\infty} e^{(j)}=0$, which could guarantee the convergence of the algorithm. The iteration won't stop until the change of expected system distortion falls beneath a certain threshold.
\par

\begin{algorithm}
  \caption{Resource Allocation Strategy among Vectors}
   \label{alg:Optimal Strategy for multivariate Gaussian Source}
   \KwIn{$\Sigma_S,~\{R_i\},~\{\alpha_i\},~\{K_i^d\},~K,~P,~\sigma_w^2,~\epsilon,~\tilde e,~C$}
   \KwOut{$\bm{P}^*=(P^*_1,...,P^*_m),~\bm{K}^*=(K^*_1,...,K^*_m)$}
   \textbf{initialize procedure:}\\
   $e^{(0)}=\tilde e,~j=0$\;
   $P^{(0)}_i=\frac{1}{m}P,~K^{(0)}_i=\frac{1}{m}K,~(i=1,2,...,m)$\;
   \SetKw{Kwend}{end initial procedure}

   \Repeat{$ED(\bm{P}^{(j)},~\bm{K}^{(j)})-ED(\bm{P}^{(j-1)},~\bm{K}^{(j-1)}) \le \epsilon$}{
   $j = j+1,~~~e^{(j)}=Ce^{(j-1)}$\;
   $\Phi(\bm{P}\!,\bm{K},e^{(j)})\!=\!ED(\bm{P},\bm{K})\!-\!e^{(j)}\!\!\sum\limits_{u=1}^{2m+2}\!\!\varphi (g_u(\bm{P},\bm{K}))$\;
   $(\bm{P}^{(j)},~\bm{K}^{(j)})=\arg\min~\Phi(\bm{P},\bm{K},e^{(j)})$\;}

   $\bm{P}^*=\bm{P}^{(j)}$\;
   Round $\bm{K}^*=\lfloor\bm{K}^{(j)}\rfloor$ to obtain a basic solution\;
   $K_r=K-\rm{sum}(\bm{K^*})$\;

   \Repeat{$K_r=0$}{
   $K_r = K_r-1$\;
   $\mathcal{S}_r=\{1\}\cup\{n\mid K_n^*<K_{n-1}^*\}$\;
   \ForEach {$~n_r \in \mathcal{S}_r$ }{
   $\bm{K} = \bm{K^*}, ~\rm{except} ~{K_{n_r}}= {K_{n_r}^*}+1$\;
   Calculate $ED$ according to (\ref{equation:single-optimization2}),({\ref{equation:muti-optimization1}})\;
   }
   $ED_r=\min\{ED\}$\;
   $\bm{K^*}=\bm{K} ~{\rm corresponding~to} ~ED_r$\;}

   \Return $\bm{P}^*,~\bm{K}^*$\;
\end{algorithm}
\par
Based on the fractional bandwidth results and inspired by Lemma 1, we integrate rounding and greedy strategies to reallocate the bandwidth for a feasible integer solution (Lines 10-21). The continuous values are rounded down for a basic bandwidth allocation solution in Line 10. Then, the residual channel uses are further reallocated. Based on Lemma 1, we only need to consider a few candidates in the set of $\mathcal{S}_r$ during each reallocation. This decreases the number of possible locations to reallocate the channel use and assures that the bandwidth solution after reallocation satisfies (\ref{equation:bandwidth-optimal-condition}). Inspired by Lemma 1, the locally optimal solution can be found in this greedy manner.
\par
{\  Now, we can analyze the computational complexity of Algorithm 2. At the first stage, it requires complexity of $\mathcal{O}(\frac{1}{C}\ln\frac{4N\tilde{e}}{\epsilon})$ \cite{hertog1994interior}. Here, $N=2m+2$ is the number of constraint functions. At the second stage, it requires complexity of $\mathcal{O}(mK_r)$. Hence, the complexity of Algorithm 2 is $\mathcal{O}(\frac{1}{C}\ln\frac{4N\tilde{e}}{\epsilon}+mK_r)$.}
\par
Combined with the solution of the first stage adopting Algorithm \ref{alg:Optimal Strategy for Single-Gaussian Source}, we can finally approach the solution of the optimization problem formulated in (\ref{equation:muti-optimization1}) for the multivariate Gaussian source scenario. In fact, the idea of the proposed two-stage algorithm can also be traced in the BCD method.
\par
\section{Performance Evaluation}\label{section:simulation}
We carry out simulations to evaluate performance of HDA transmission where joint rate and resource allocation is performed under various scenarios. The channel fading coefficient of the $i^{th}$ vector ($\bm S_i$) transmission is generated as $h_i\sim\mathcal{CN}(0,1)$. All simulations are implemented in Matlab 2015a. For the multivariate Gaussian source scenario, transmission of any vector component is considered to be independent with each other. Signal-to-distortion ratio (SDR) \cite{skoglund2006hybrid} in dB is used as the evaluation metric, which is defined as:
\begin{equation*}\label{equation:SDR}
    SDR\buildrel\Delta\over=10\lg\frac{\sigma_i^2}{D_i}=10\lg\frac{{E}||\bm{S}_i||^2}{{E}||\bm{S}_i-\bm{\hat{S}}_i||^2}.
\end{equation*}
\par

\subsection{Simulation Results of Single Gaussian Source Scenario}\label{subsection:single-simulation}
In this subsection, we evaluate performance of HDA transmission for single Gaussian vector $\bm S_i$. The i.i.d. complex Gaussian vector $\bm S_i$ is with zero mean and unit variance.
\subsubsection{Impacts of $\alpha_i$ and $R_i$}
We will evaluate the impacts of {rate} allocation and resource allocation on HDA performance. As analyzed in \ref{subsection:Problem Analysis}, bandwidth allocation is determined by the dimension of the vector. Thus, we only focus on {rate} allocation represented by the quantization rate $R_i$, and power allocation represented by the coefficient $\alpha_i$. Scenarios with various average SNR settings defined in (\ref{equation:generalSNR}), as well as various bandwidth expansion ratio settings defined as $\eta=\frac{K_i}{L}$ channel uses/sample, are took into consideration.
\par
\begin{figure}[ht]
\centering
\subfigure[$\eta =2$]{\includegraphics[width=4.35cm]{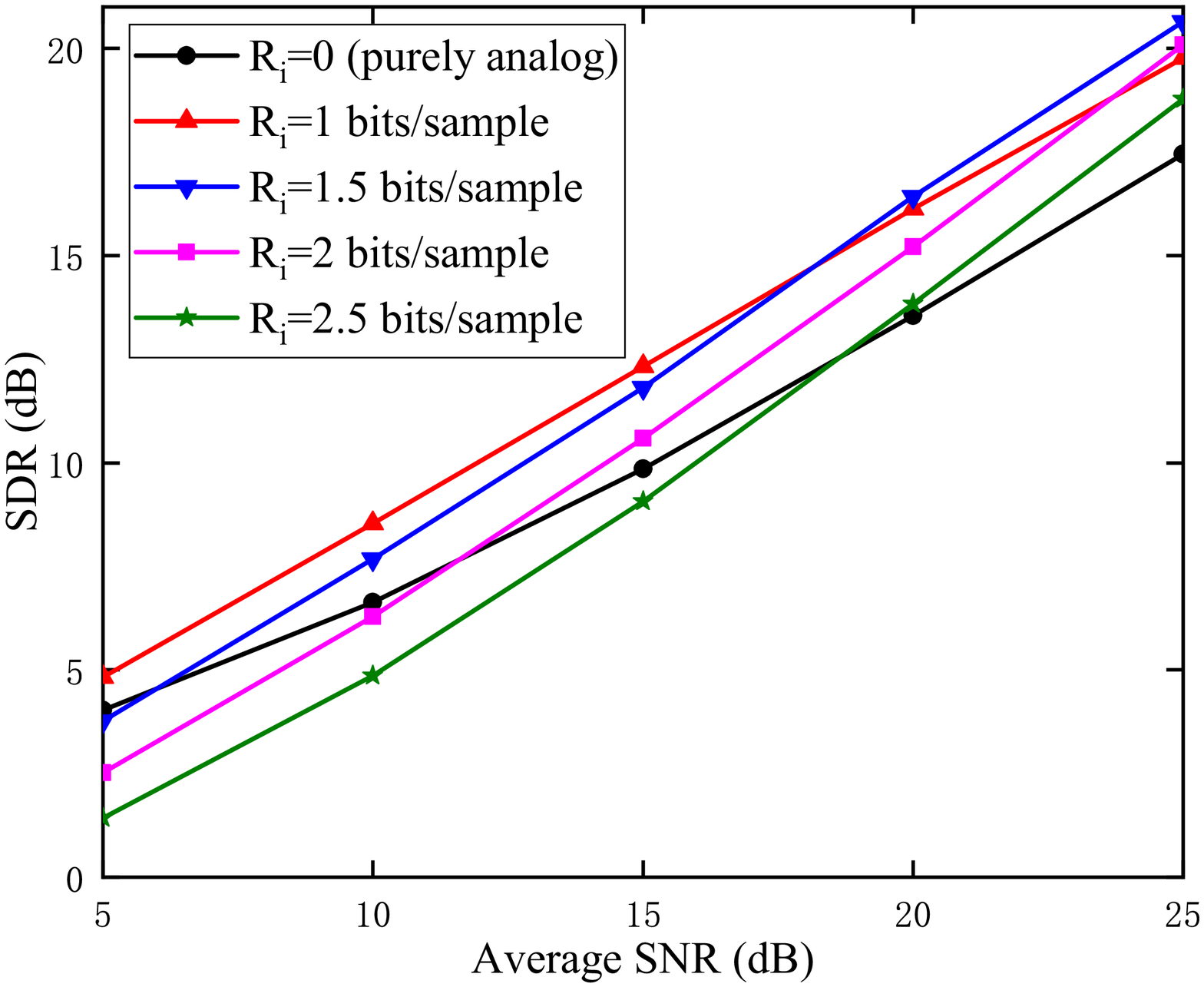}\label{figure:single_fix_power_SDR_a}}
\subfigure[$\eta =3$]{\includegraphics[width=4.35cm]{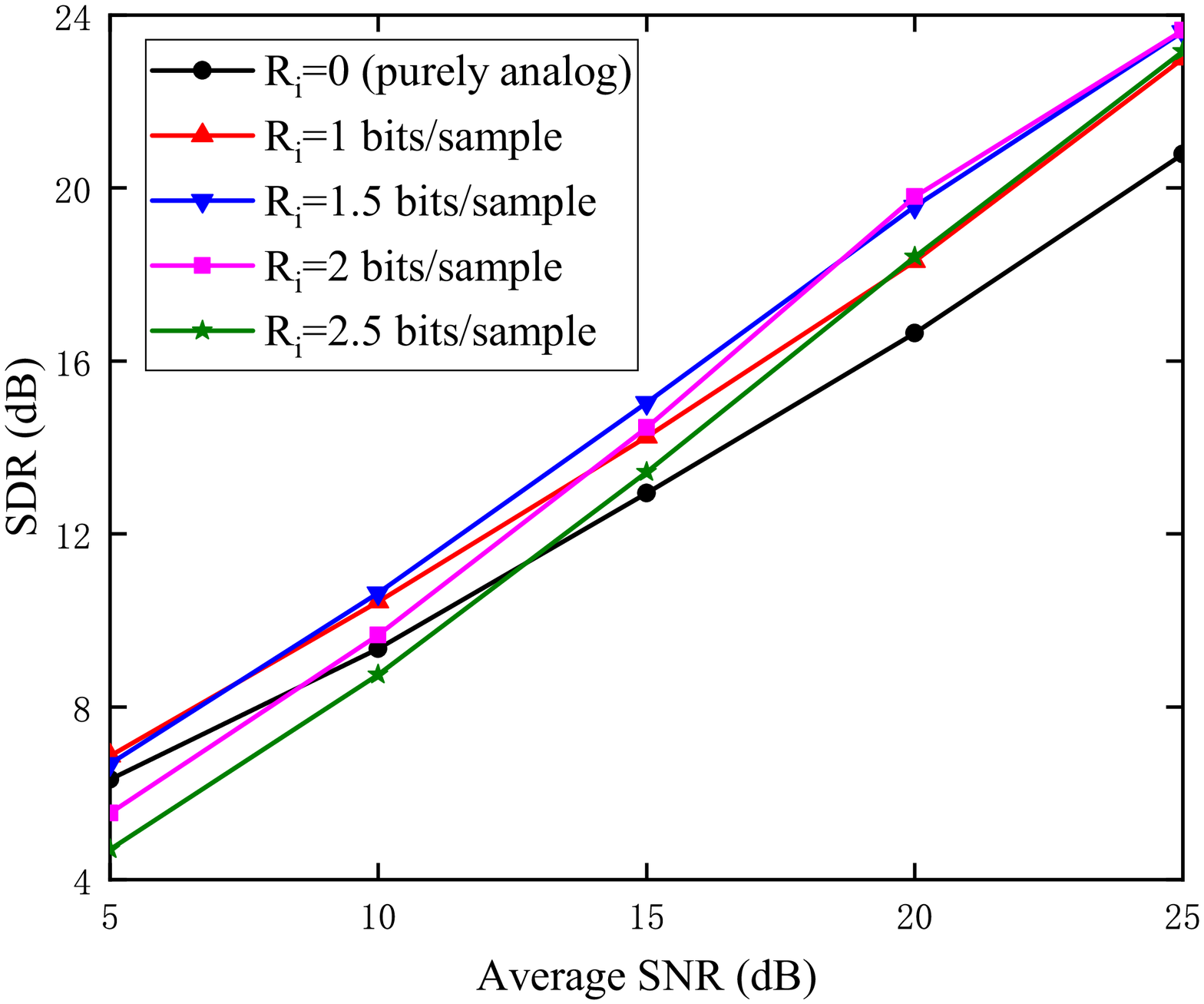}\label{figure:single_fix_power_SDR_b}}
\caption{SDR performance of the HDA system for various quantization rates $R_i$; $\alpha_i=0.6$; single Gaussian vector with unit variance over the quasi-static Rayleigh fading channel.}\label{figure:single_fix_power_SDR}
\end{figure}
We first evaluate HDA performance with various quantization rates in the case of $\alpha_i=0.6$, as shown in Fig.\ref{figure:single_fix_power_SDR}. Over a wide range of SNR, performance with various $R_i$ alternating rises. Specifically, at low SNR, the HDA system with smaller $R_i$ tends to provide better performance. Purely analog transmission, corresponding to $R_i=0$ also performs better at low SNR. At medium and high SNR, the HDA system with larger $R_i$ would provide performance improvement. Besides, the system with much large $R_i$ (e.g., $R_i=2.5$ bits/sample) would show substantially improved performance with higher $\eta$ (e.g., $\eta=3$ channel uses/sample). The reasons for these observation will be elaborated later.
\par
Then, we fix $R_i=1.5$ bits/sample and implement the HDA system with various power allocation coefficients $\alpha_i$, as shown in Fig. \ref{figure:single_fix_rate_SDR}. The extreme cases of $\alpha_i=0$ and $\alpha_i=1$ correspond to purely analog transmission and purely digital transmission, respectively. From Fig. \ref{figure:single_fix_rate_SDR}, the SDR curve of $\alpha_i=0.6$ is above other SDR curves for almost SNR settings. This implies that the optimal $\alpha_i$ is about 0.6 when $R_i=1.5$ bits/sample. Another observation is that the system with small $\alpha_i$ (e.g., $\alpha_i=0.3$) is inferior to that with larger $\alpha_i$ (e.g., $\alpha_i=0.9$) at low and medium SNR. However, reverse results can be observed at high SNR. Moreover, the turning point lies at lower SNR when $\eta=3$ channel uses/sample.
\par
\begin{figure}[ht]
\centering
\subfigure[$\eta =2$]{\includegraphics[width=4.35cm]{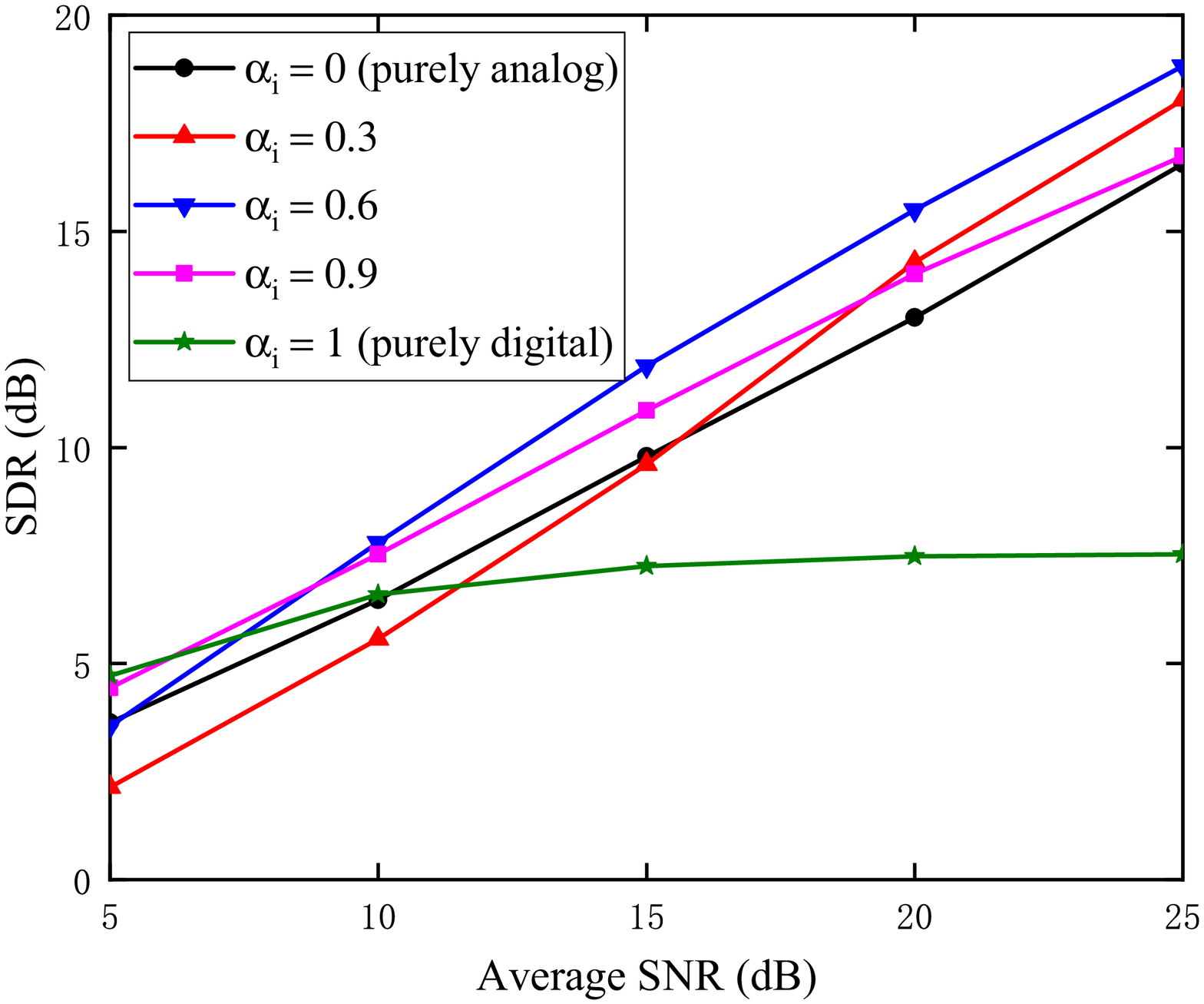}\label{figure:single_fix_rate_SDR_a}}
\subfigure[$\eta =3$]{\includegraphics[width=4.35cm]{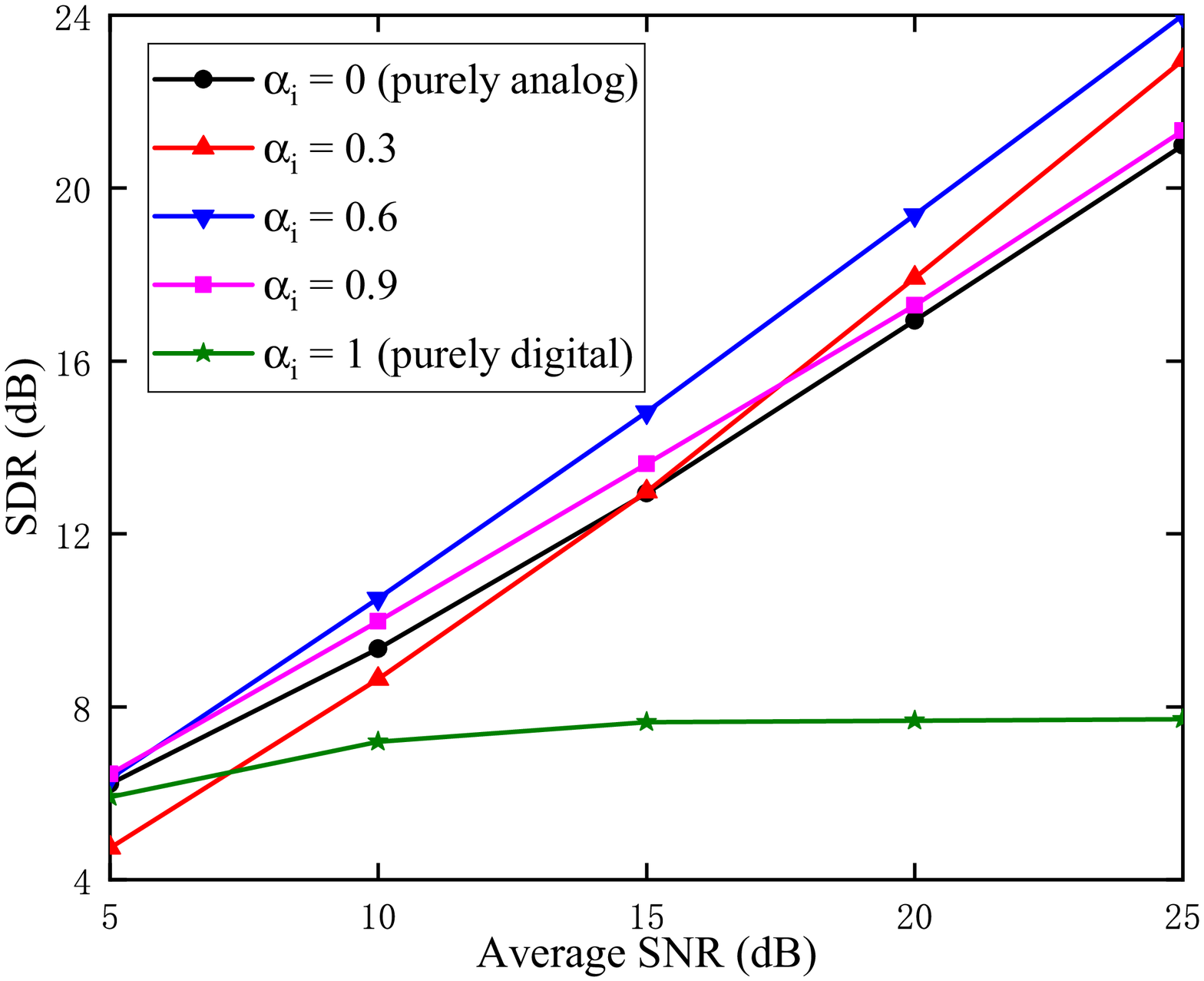}\label{figure:single_fix_rate_SDR_b}}
\caption{SDR performance of the HDA system for various power allocation coefficients $\alpha_i$; $R_i=1.5$ bits/sample; single Gaussian vector with unit variance over the quasi-static Rayleigh fading channel.}\label{figure:single_fix_rate_SDR}
\vspace{-0.2cm}
\end{figure}

\par
\begin{figure}[ht]
\centering
\subfigure[The results of quantization rate]{\includegraphics[width=4.35cm]{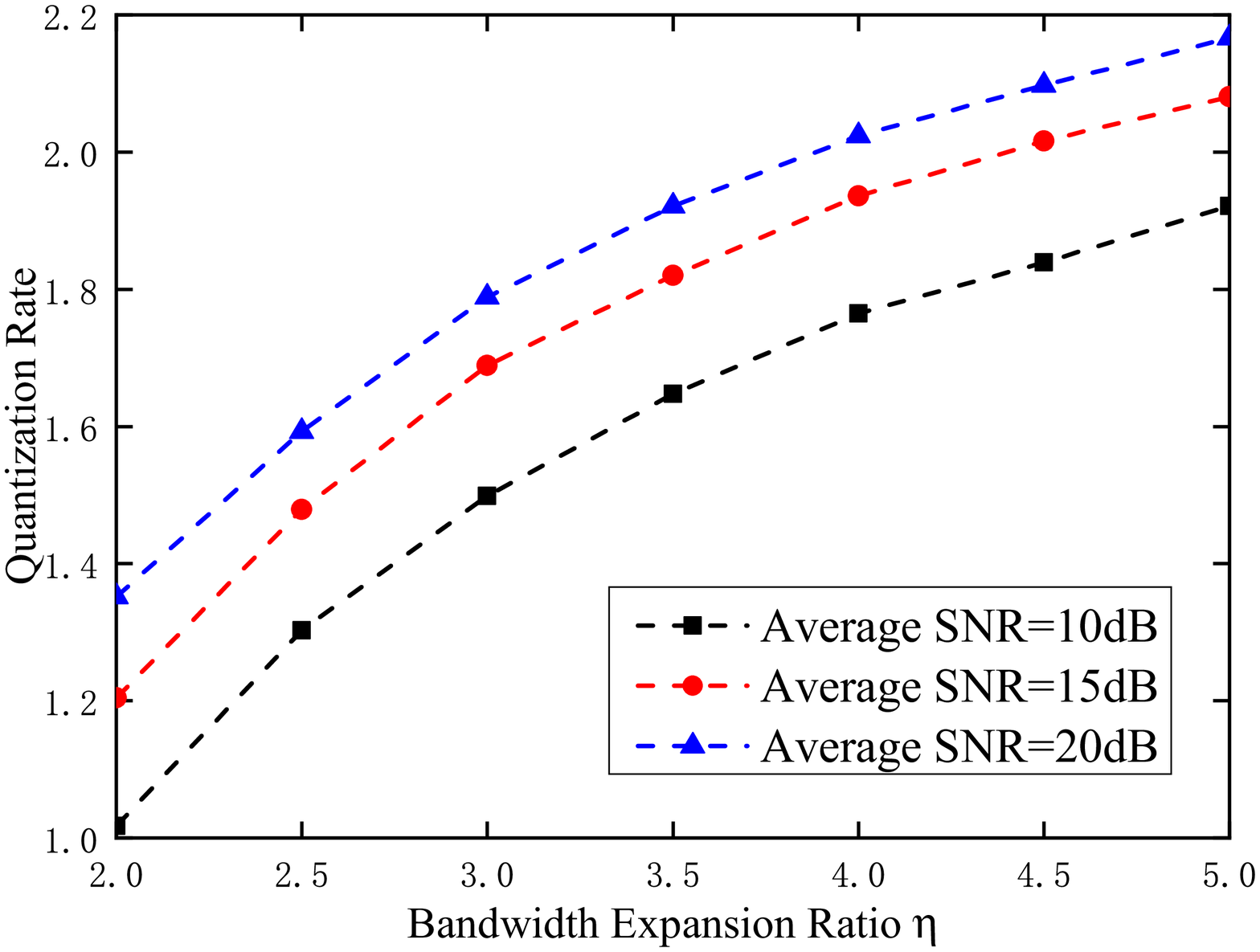}}
\subfigure[The results of power allocation]{\includegraphics[width=4.35cm]{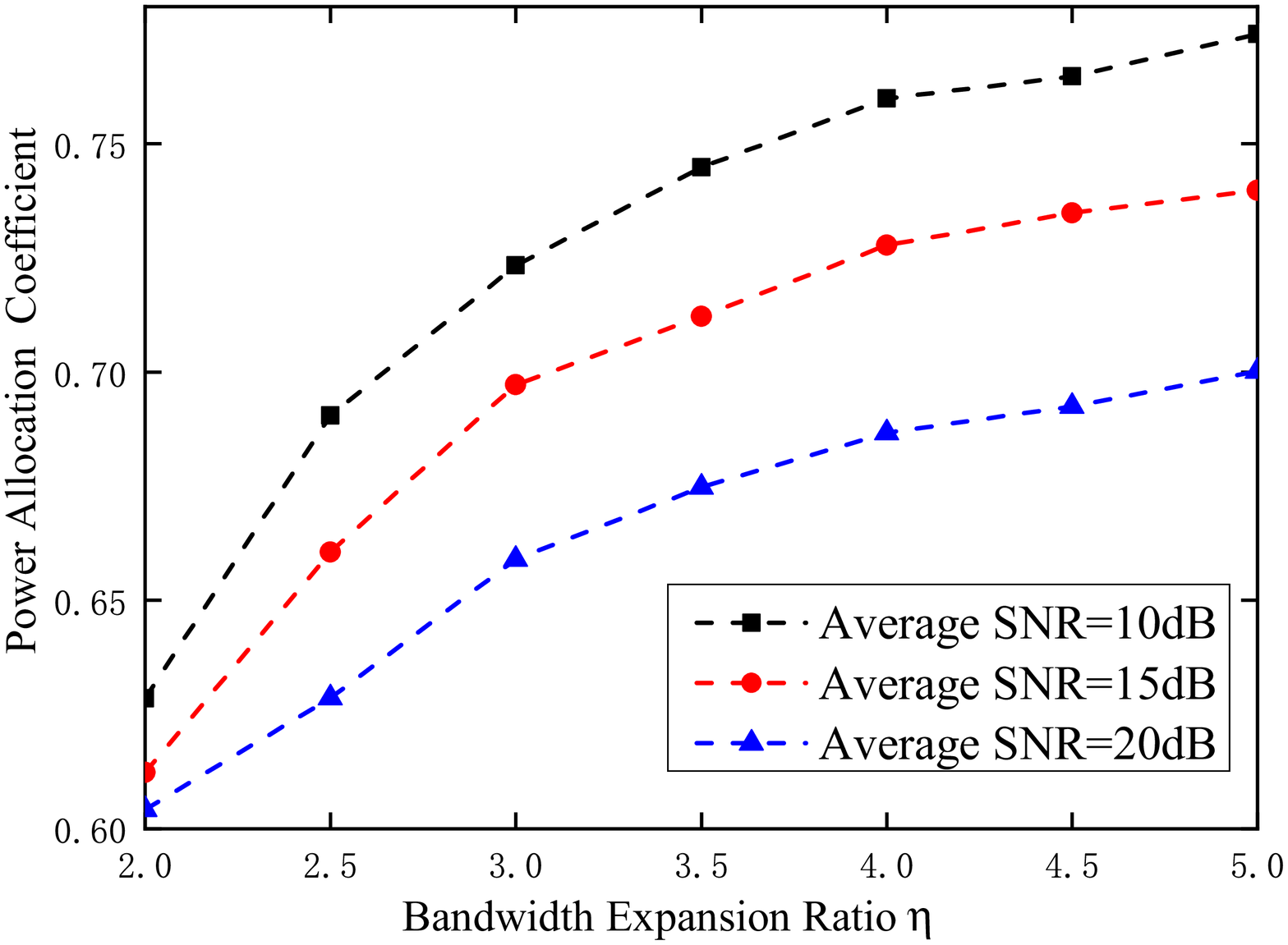}}
\caption{The asymptotical optimization results $R_i$ and $\alpha_i$ of the HDA system; single Gaussian vector with unit variance over the quasi-static Rayleigh fading channel.}\label{figure:optimal-parameters}
\end{figure}
The above observations indicate that $R_i$ and $\alpha_i$ should be selected wisely considering SNR and bandwidth expansion ratio $\eta$. To further explain these observations, we depict the joint allocation results versus $\eta$ at SNR=10\,dB, 15\,dB and 20\,dB. The results in Fig.\ref{figure:optimal-parameters} illustrate several insights, which are consistent with above observations. First, higher $R_i$ and larger $\alpha_i$ are required for better performance with increasing $\eta$. Such phenomenon can be explained as follows. When more channel uses are available, higher transmitting rate can be achieved with acceptable outage probability. Meanwhile, this implies more {information} is conveyed by digital transmission, which should be assigned more power. Second, higher SNR would result in higher R and smaller $\alpha_i$. In fact, increasing SNR indicates better channel conditions with greater channel capacity. Thus, the HDA system can be implemented at higher rate. In this case, smaller digital power could also ensure lower outage probability, and more residual power can be allocated for analog transmission to improve system performance.

\subsubsection{Performance comparison}
We compare performance of the proposed HDA system against other systems, as shown in Fig.\ref{figure:single_compare-SDR}. The realistic scenario are considered, where the actual noise power is not known to the encoder. Thus, the encoders of all schemes are optimized at a target channel SNR $\gamma^{tar}=10$\,dB. The reference systems are described as follows.
\begin{itemize}
\item Purely Analog Scheme: a purely analog scheme solely employing the analog part of the HDA scheme ($\alpha_i=0$). Note that analog transmission only occupies $L$ channel uses, while another $(\eta-1)L$ channel uses are silent. For fair comparison, the transmitter is allowed to scale its power to $\eta P$, as that in \cite{caire2007distortion}, \cite{gunduz2005source}.
\item Purely Digital Scheme: a purely digital scheme solely employing the digital part of the HDA scheme ($\alpha_i=1$).
\item Chen-Tuncel Scheme: a HDA scheme designed for the Wyner-Ziv problem proposed in \cite{chen2014zero}. As analyzed in \cite{chen2014zero}, the designed system can be adopted in HDA transmission for 1:2 bandwidth expansion without side information. Specifically, the scheme transmits the uncoded vector in the first channel use, as well as the superposed digital and analog signals in the next channel use. Both power allocation of the two channel uses and digital-analog power allocation of the second channel use are optimized;
\item Theoretical Limit (OPTA): the optimal performance theoretically attainable (OPTA) for quasi-static Rayleigh fading channels, given by $D_i^{opt}=\frac{e^{1/\gamma_i}}{\gamma_i}\int_{1}^{\infty}\frac{e^{-t/\gamma_i}}{t^\eta}\,dt$ \cite{gunduz2005source}.
\end{itemize}
\par
As Fig.\ref{figure:single_compare-SDR} shows, the proposed HDA system outperforms the purely digital system and the purely analog system for almost SNR settings. Although the purely digital system suffers from cliff effect, the proposed HDA system can provide robust performance over a whole range of SNR, and alleviate cliff effect apparently (always with positive slope of SDR curve). Moreover, the proposed HDA system can achieve performance gains between 0$\sim$2.3\,dB in terms of SDR, compared with the Chen-Tuncel scheme. This is due to the joint rate allocation and resource allocation, with consideration of fading distribution.
\par
\begin{figure}[hbt]
\centering
\includegraphics[width=6cm]{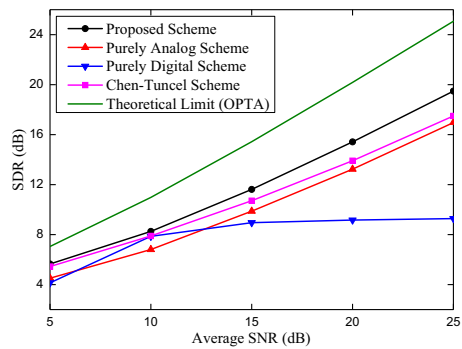}
\caption{SDR performance of the optimized HDA system, the Chen-Tuncel system, the purely analog, the purely digital systems and OPTA; single Gaussian vector with unit variance over the quasi-static Rayleigh fading channel, $\eta = 2$ channel uses/sample, $\gamma^{tar}=10$\,dB.}\label{figure:single_compare-SDR}
\end{figure}

Note that in the problem formulation in Section \ref{subsection:Problem Analysis}, approximation is adopted with small $\tau_i\rightarrow0$, which implies the limit of low outage probability. To assess the effect of such approximation, we evaluate the HDA scheme optimized with mesh grid method. Here, $\alpha_i$ and $R_i$ are searched to minimizing system distortion as formulated in (\ref{equation:distortionExpression1}) without approximation. Besides, in the problem analysis in Section \ref{subsection:Problem Analysis}, the asymptotical characteristic of ECSQ as formulated in (\ref{equation:rate-distortion}) is utilized. For the integrity of the paper, we evaluate the performance of the proposed scheme with practical ECSQ and entropy coding implementations. We employ the fixed-point algorithm \cite{netravali1976optimum} to design the ECSQ (FP-ECSQ) for quantizing a complex Gaussian vector with $L=3000$ samples. As suggested in \cite{netravali1976optimum}, the symmetric quantizer is considered. Then Huffman coding is adopted as the entropy coding to encoding the quantized bits after the ECSQ. Considering implementation complexity, the scheme is designed without channel codes in the digital part, as that in \cite{wang2009hybrid}. The results are depicted in Fig. \ref{figure:single_mesh}. The average bandwidth expansion ration $\eta$ is set as 3 channel uses/sample. The encoders of all schemes are designed with actual channel SNR. As Fig. \ref{figure:single_mesh} shows, the gap between the performance with mesh grid method and the asymptotic performance in 5 dB is larger than that in higher SNR. This is due to that when SNR is 5 dB, the channel condition cannot support reliable transmission with low outage probability. However, for most channel SNR settings, small gaps exist among these three curves of simulation results, which suggests that the approximation adopted in the analysis is sensible, and the solution is good to be utilized for practical ECSQ.
\par

\begin{figure}[hbt]
\centering
\includegraphics[width=6cm]{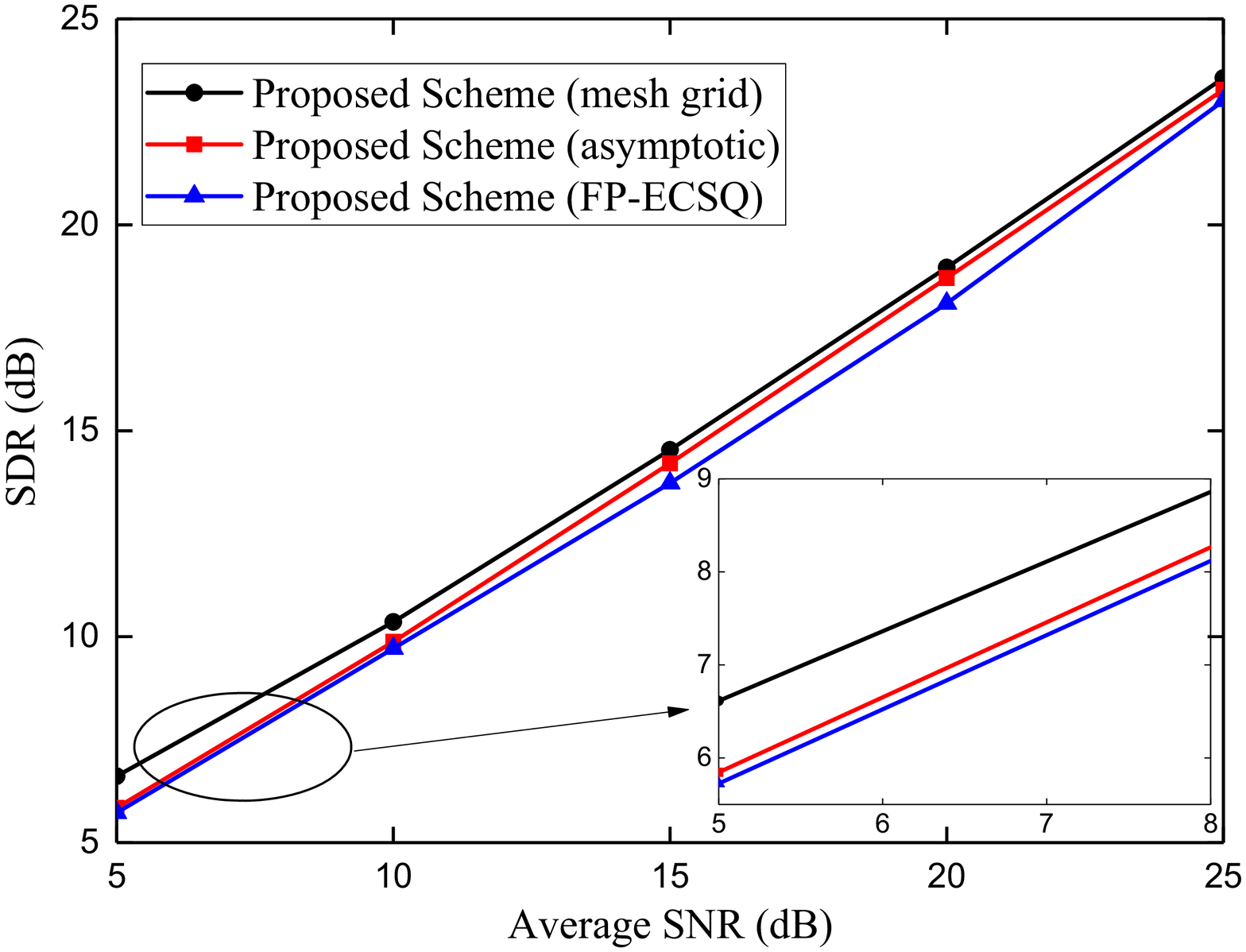}
\caption{SDR performance of the scheme optimized with mesh grid, the scheme optimized asymptotically, and the scheme implemented with FP-ECSQ; single Gaussian vector with unit variance over the quasi-static Rayleigh fading channel, $\eta = 3$ channel uses/sample.}\label{figure:single_mesh}
\vspace{-0.5cm}
\end{figure}
\par
\subsection{Simulation Results of Multivariate Gaussian Source Scenario}\label{subsection:multi-simulation}
Next we evaluate performance of HDA transmission for the multivariate Gaussian vectors with an $m \times m$ diagonal covariance matrix $\Sigma_S={\rm diag} \, (\sigma_1^2,\sigma_2^2,...,\sigma_m^2)$. In the following simulations, the number of vectors is set to 4, i.e., $m=4$. The corresponding diagonal entries are set as $(2,1,0.5,0.1)$ similar to the differences among sensor measurements in \cite{liu2012energy}. In fact, the difference of variances among different vectors may be of larger magnitude in the case of video transmission \cite{cui2014robust}.
\par

\subsubsection{{\ Impacts of intra-component optimization and inter-component optimization}}
Here we implement the experiment to validate the conclusions in Section \ref{section:multi-gaussian source}. As analyzed previously, the optimization of multivariate Gaussian vectors is addressed in two stages, i.e. intra-component optimization and inter-component optimization. Fig.\ref{figure:multi-Gaussian_singleopt SDR} presents the separate effect of the optimization over each stage. For intra-component optimization, ``ERA'' means the resource is allocated equally among vectors. For inter-component optimization, two settings for the rate and the power allocation coefficient are considered, i.e., ($R_i=1.5, \alpha_i=0.4)$ and ($R_i=2, \alpha_i=0.6)$ for $i=1,\cdots,4$. We can observe that both intra-component optimization and inter-component optimization are of great effects on overall performance. We also remark that the proposed system always outperform the separately optimized systems and the gain is substantial for entire SNR settings (about 1$\sim$2.5dB). Thus the joint optimization is essential to provide graceful and robust performance.
\begin{figure}[ht]
\centering
\includegraphics[width=6cm]{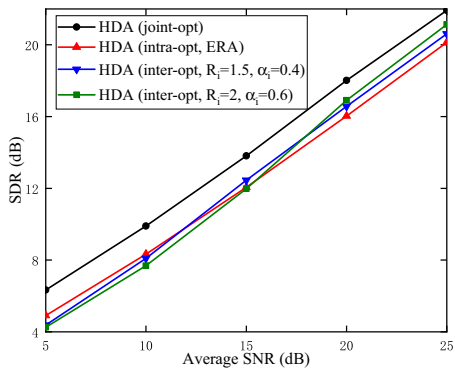}
\caption{SDR performance of HDA system with joint optimization, HDA scheme with intra-component optimization, and HDA scheme with inter-component optimization; multivariate Gaussian vectors over the quasi-static Rayleigh fading channel, $\eta = 2$ channel uses/sample, $\gamma_{tar}=10$\,dB.}\label{figure:multi-Gaussian_singleopt SDR}
\end{figure}

\par

\begin{figure}[ht]
\centering
\includegraphics[width=6cm]{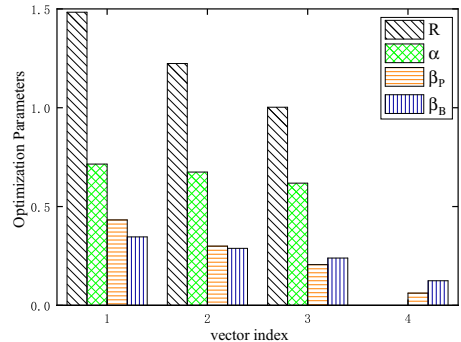}
\caption{The joint optimization results of the proposed HDA system; multivariate Gaussian vectors over the quasi-static Rayleigh fading channel, $\eta = 2$ channel uses/sample, $\gamma_{tar}=10$\,dB.}\label{figure:multi_parameter}
\end{figure}


\par
We next depict the joint optimization results for HDA transmission of the multivariate Gaussian vectors in Fig.\ref{figure:multi_parameter}. Due to space limit, we only present the result optimized at a target SNR setting $\gamma^{tar}=10$\,dB with a bandwidth expansion ratio $\eta=2$ channel uses/sample. Similar results can be obtained for other SNR values and other bandwidth expansion scenarios. The vectors are indexed from 1 to 4 in the descending order of variances. $\bm{\beta_B}$ and $\bm{\beta_P}$ are the ratios of bandwidth and power allocation among vectors, respectively, which can be expressed as ${\beta_B}_i=\frac{K_i}{K},{\beta_P}_i=\frac{P_i}{P}$. From Fig.\ref{figure:multi_parameter}, we could figure out that the vector with greater variance should be assigned more transmission resources and quantized with a higher rate. Meanwhile, more power should be allocated for digital transmission. Such observation supports our previous claims in Section \ref{section:multi-gaussian source} that the priority (i.e., variance) of the Gaussian vector plays an important role in the joint {rate} allocation and resource allocation. Moreover, it also validate our conjecture in Lemma 1. Particularly, the digital-analog power allocation coefficient and the quantization rate of the $4^{th}$ vector are zero, which means its transmission reduced to the purely analog type. The reason of this result is the fact that the vector with lower priority saves its bandwidth resource for transmission of the vector with higher priority.
\begin{figure}[ht]
\centering
\includegraphics[width=6cm]{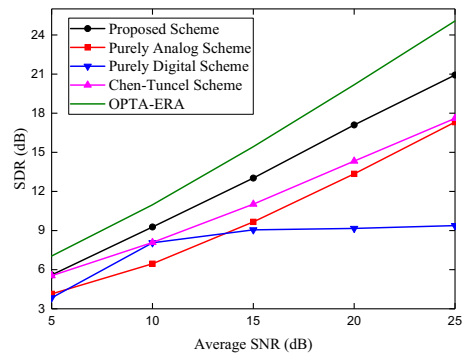}
\caption{SDR performance of the optimized HDA system, the Chen-Tuncel system, the purely digital and the purely analog systems; multivariate Gaussian vectors over the quasi-static Rayleigh fading channel, $\eta = 2$ channel uses/sample, $\gamma_{tar}=10$\,dB.}\label{figure:multi_compare-SDR}
\end{figure}

\begin{figure}[hbt]
\centering
\includegraphics[width=6cm]{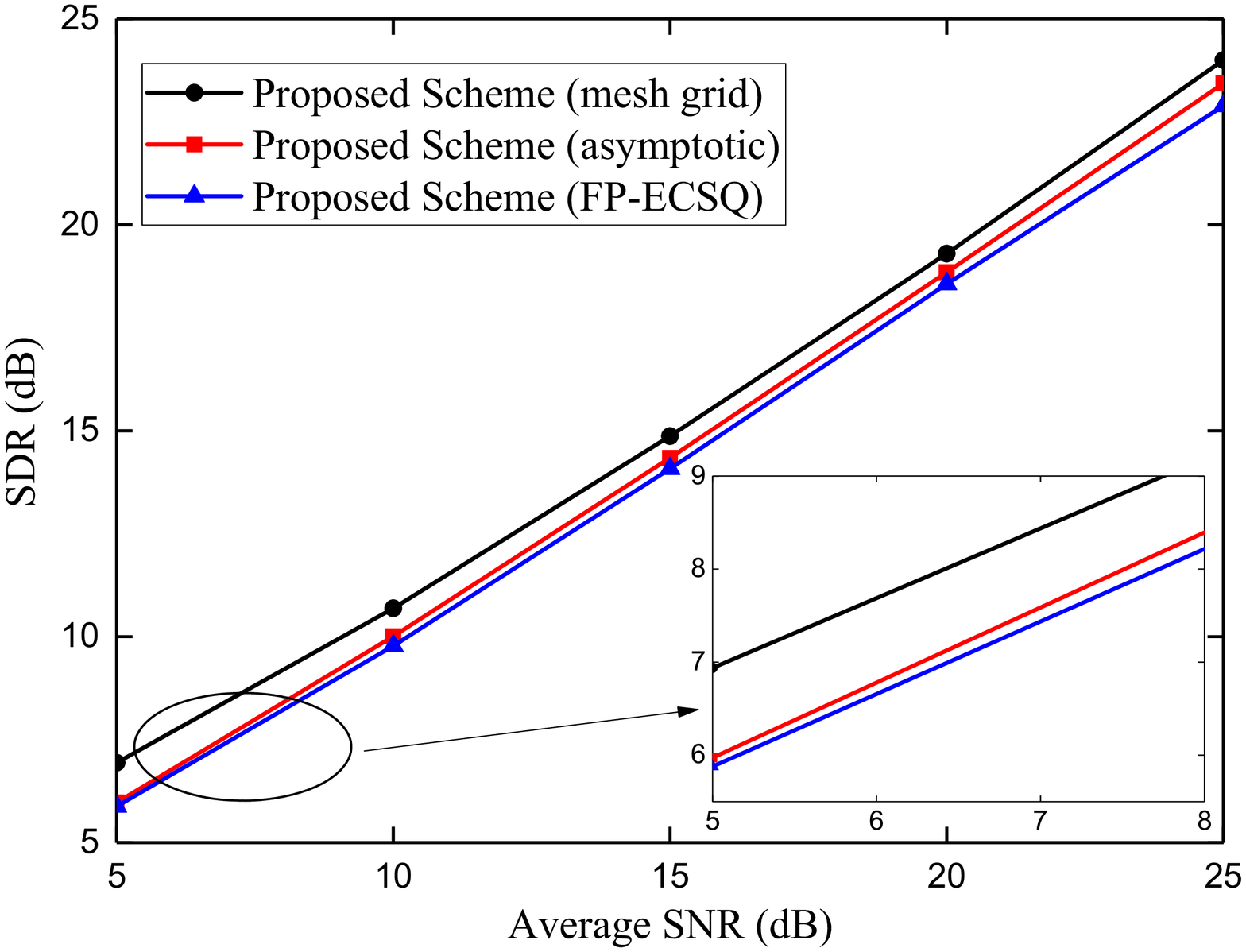}
\caption{SDR performance of the scheme optimized with mesh grid, the scheme optimized asymptotically, and the scheme implemented with FP-ECSQ; multivariate Gaussian vectors over the quasi-static Rayleigh fading channel, $\eta = 3$ channel uses/sample.}\label{figure:multi_compare-SDR}
\end{figure}
\subsubsection{Performance comparison}
Then we compare the proposed HDA system with other systems over various SNR settings at $\eta=2$ channel uses/sample, as illustrated in Fig.\ref{figure:multi_compare-SDR}. The comparisons are the purely analog, the purely digital and the Chen-Tuncel systems. All the reference schemes are implemented with the ERA strategy among vectors. And the theoretical limit is derived for the ERA scheme. Besides, the encoders of all schemes are optimized at $\gamma^{tar}=10$\,dB. From Fig.\ref{figure:multi_compare-SDR}, we remark that systems in the multivariate Gaussian source scenario perform analogously to those in single Gaussian source scenario (cf. Fig.\ref{figure:single_compare-SDR}). However, the SDR gain of the proposed system is enhanced considerably (up to 3.5dB) due to the joint optimization with consideration of source characteristics differences. In fact, the SDR gain would be further improved when the difference of variances is of larger magnitude, according to our simulation results.
\par
For the integrity of the paper, we also evaluate performance of the HDA scheme optimized without approximation by the mesh grid method, and performance of the HDA scheme implemented with FP-ECSQ and Huffman coding, as Fig. \ref{figure:multi_compare-SDR} shows. Considering the implementation complexity, two vectors are selected as multivariate Gaussian vectors with the variance $\sigma_1^2=2, \sigma_2^2=1$ respectively. Compared with the result in Fig. \ref{figure:single_compare-SDR}, the gaps for the multivariate Gaussian source scenario are slightly larger. This is due to the fact that two vectors are encoded and the optimization of resource allocation between these two vectors are also need to be considered. Nevertheless, the result indicates that the proposed solution adopts sensible approximation in the analysis, and can be implemented for practical quantization schemes.

\section{Conclusion}\label{section:conclusion}
This paper has proposed a joint {rate} allocation and resource allocation scheme for HDA transmission with bandwidth expansion. Due to the considered quasi-static Rayleigh fading channel, outage occurs randomly. In this case, digital distortion cannot be neglected. With consideration of analog distortion as well as digital distortion, we have derived the expression of expected system distortion under given fading distribution. Based on our analytical work, the joint optimization problem for the single Gaussian source scenario has been formulated as a minimization problem on expected system distortion, which is hard to solve. We transform the original problem into an approximated problem, based on the limit of low outage probability and involved ECSQ. Then, inspired by the BCD method, a tractable iterative solution adopting the gradient method has been proposed to solve this problem asymptotically. Furthermore, we extend our work to the multivariate Gaussian source scenario. In this case, we propose a two-stage algorithm integrating rounding and greedy strategies. For the slack problem where bandwidth variables can range continuously, its convexity has been proven, which guarantees its optimal solution. Then the algorithm approaches the locally optimal integer solution in a greedy manner. Finally, extensive simulations are performed. The results validate the advantages of joint rate and resource allocation optimization under the single Gaussian source scenario as well as the multivariate Gaussian source scenario.
\par
The current discussion is based on the assumption of the Gaussian source. In the future, we plan to investigate the implementation of the proposed HDA transmission system into correlated Gauss-Markov source as well as practical multimedia applications.

%
%
%
\footnotesize
\balance

\end{document}